\documentclass[12pt]{article}

\usepackage{amsmath,amsthm,amsfonts,amssymb}
\usepackage{bm}

\usepackage{graphicx}
\usepackage{caption}
\usepackage{subcaption}
\usepackage{float}
\usepackage{epstopdf}
\epstopdfDeclareGraphicsRule{.tiff}{png}{.png}{convert #1 \OutputFile}
\AppendGraphicsExtensions{.tiff}

\usepackage{booktabs}
\usepackage{longtable}

\usepackage[margin=1in]{geometry}
\usepackage[nodisplayskipstretch]{setspace}
\usepackage{xcolor}
\usepackage{url}
\usepackage{enumerate}

\usepackage{algorithm,algpseudocode}

\newcounter{phase}[algorithm]
\newlength{\phaserulewidth}

\newcommand{\phase}[1]{%
  \vspace{-1.25ex}
  \Statex\leavevmode\llap{\rule{\dimexpr\labelwidth+\labelsep}{\phaserulewidth}}\rule{\linewidth}{\phaserulewidth}
  \Statex\strut\refstepcounter{phase}\textit{Phase~\thephase~--~#1}%
  \vspace{-1.25ex}\Statex\leavevmode\llap{\rule{\dimexpr\labelwidth+\labelsep}{\phaserulewidth}}\rule{\linewidth}{\phaserulewidth}}

\usepackage{natbib}

\usepackage[colorlinks=true,linkcolor=blue,citecolor=blue,urlcolor=blue]{hyperref}

\newtheorem{theorem}{Theorem}[]

\newtheorem{lemma}[theorem]{Lemma}
\theoremstyle{definition}
\newtheorem{definition}{Definition}[section]

\newcommand{\beginsupplement}{%
    \setcounter{section}{0}%
    \setcounter{subsection}{0}%
    \setcounter{equation}{0}%
    \setcounter{figure}{0}%
    \setcounter{table}{0}%
    \renewcommand{\thesection}{S\arabic{section}}%
    \renewcommand{\thesubsection}{S\arabic{section}.\arabic{subsection}}%
    \renewcommand{\theequation}{S\arabic{equation}}%
    \renewcommand{\thefigure}{S\arabic{figure}}%
    \renewcommand{\thetable}{S\arabic{table}}%
    \renewcommand{\theHsection}{Supplement.\arabic{section}}%
    \renewcommand{\theHsubsection}{Supplement.\arabic{section}.\arabic{subsection}}%
    \renewcommand{\theHequation}{Supplement.\arabic{equation}}%
    \renewcommand{\theHfigure}{Supplement.\arabic{figure}}%
    \renewcommand{\theHtable}{Supplement.\arabic{table}}%
}


\newcommand{\bpm}{\begin{pmatrix}}
\newcommand{\epm}{\end{pmatrix}}

\allowdisplaybreaks

\title{Regression and Dimension Reduction for Multivariate Mixed-Type Data\\
via Semiparametric Gaussian Copula}

\author{
  Debangan Dey\thanks{Corresponding author. \texttt{debangan@tamu.edu}}\\
  \small Department of Statistics, Texas A\&M University, College Station, TX, USA
  \and
  Vadim Zipunnikov\\
  \small Department of Biostatistics, Johns Hopkins Bloomberg School of Public Health, Baltimore, MD, USA
}

\date{}

\begin{document}

\maketitle

\begin{abstract}
Clinical and epidemiological studies encode participant information in multivariate vectors with mixed type variables on continuous, truncated, ordinal, and binary scales. Semiparametric Gaussian Copula (SGC) assumes that observed data is generated by latent multivariate normal random variables which marginals are monotonically transformed and then truncated/ordinalized/binarized. In SGC, the latent correlation matrix fully determines the dependence structure and it is 
estimated through an inversion of ``bridges'' between Kendall’s Tau rank correlations of observed variables and latent correlations. By employing SGC, we develop regression (SGC-Reg), principal component analysis (SGC-PCA), and principal component regression (SGC-PCR) for latent representations of observed data. To build our framework, we make several key contributions: i) establishing novel bridging results for general ordinal type variables, ii) developing regression estimation on the latent space and deriving asymptotic normality of estimators, iii) developing a computationally efficient algorithm that reduces calculation complexity of all steps including calculation of asymptotic covariance matrix from $O(n^4)$ to $O(n\log n)$, iv) developing methods to predict latent representations of observed data and perform imputation of missing data, and v) developing principal component analysis and principal component regression on the latent space. We apply our framework to study the association between a 5-year mortality and 61 frailty-related measures composed of 29 continuous, 17 ordinal, and 15 binary variables in 9478 participants of 1999-2010 waves of National Health and Nutrition Examination Survey (NHANES).
\end{abstract}

\noindent\textbf{Keywords:} Semiparametric Gaussian copula, mixed data, latent correlation, regression, principal component analysis, NHANES

\section{Introduction}
\label{sec:intro}

To gain insights into human health, national health surveys and biobanks collect multivariate records with variables measured on different scales, including binary, ordinal, truncated, and continuous. Such mixed-type data are common in studies of aging and frailty, where adverse health outcomes reflect decline across multiple physiological systems \citep{fried2001frailty, xue2011frailty}. For example, the National Health and Nutrition Examination Survey (NHANES) is frequently used to study frailty among older adults \citep{blodgett2015frailty}. Motivated by such applications, we develop a statistical framework for estimating joint dependence in multivariate mixed-type data and incorporating that dependence into regression and dimension reduction. In the remainder of this section, we review existing approaches for conditional and joint modeling of mixed data, then describe the proposed framework and summarize our main contributions.

Due to the lack of standard joint models for multivariate mixed data, conditional modeling is often used instead. These approaches select one variable as the outcome and model its conditional mean as a function of the others: logistic and probit regressions are common for binary outcomes, cumulative ordinal regression is widely used for ordinal outcomes \citep{mccullagh1980regression}, and Tobit models for truncated outcomes assume truncation of a latent normal variable whose conditional mean depends linearly on predictors \citep{tobin1958estimation, heckman1976common, hausman1977social}. The Hurdle model extends Tobit by separately modeling truncated and non-truncated components, at the cost of additional parameters \citep{cragg1971some}. These methods are typically likelihood-based and can suffer from convergence issues. Factorization models provide an alternative by combining conditional and marginal distributions. Examples include General Location Models (GLOMs) \citep{olkin1961multivariate}, which impose conditional normality for continuous variables and arbitrary distributions for discrete components, and conditional grouped conditional models (CGCMs) \citep{anderson1985grouped, de2007general, de2005pairwise}, which treat discrete variables as truncations of latent continuous variables and use polychoric and polyserial correlations to estimate a joint covariance structure. Although convenient for specifying mixed distributions, factorization models induce a hierarchy that depends on the conditioning order, so different factorizations of the same variables can lead to different parameter interpretations and inferences.

Another class is copula-based models. For example, \citep{song2009joint} defined Vector Generalized Linear Models (VGLMs) for modeling mixed data type outcomes. This approach requires embedding the marginal distributions (univariate Generalized Linear Models) into the joint distribution function via a Gaussian copula. Gaussian copulas is a popular choice to couple marginal distributions because of their analytical tractability and flexibility. \cite{jiryaie2016gaussian} introduced Gaussian copula distributions (GCD) that take a latent variable approach to embed discrete variables using the Gaussian copula. However, CGCMs, GLOMs, VGLMs, and GCDs require likelihood-based inference and are computationally intensive for high dimensions. Pairwise likelihood-based approaches \citep{de2005pairwise, jiryaie2016gaussian} reduce the computational burden but can make worse classification than the full likelihood-based approach  \citep{jiryaie2016gaussian}. 

Recently, semiparametric models have seen a wide adaptation for joint modeling of multivariate mixed type data. \cite{wang2014semiparametric} used likelihood-based inference and \cite{cai2015high} developed a rank-based approach to estimate a joint semiparametric Gaussian copula for continuous variables. \cite{fan2016multitask} extended the rank-based approaches to perform quantile regression on continuous variables. Rank-based estimation of the covariance of the semi-parametric Gaussian copula family \citep{liu2009nonparanormal, liu2012high} has been particularly attractive because of the fast and robust estimation procedure. In addition, multiple recent extensions used latent semi-parametric Gaussian copula to model mixed types data. \cite{fan2017high} developed the estimation for the case of  binary and continuous variables. \cite{yoon2020sparse, wang2025truncated} extended the approach to include truncated variables, and \cite{quan2018rank} has additionally extended it to include ternary variables (ordinal variables with three categories) and general ordinal-continuous pairs of variables. \cite{feng2019high} represented an ordinal variable via multiple dummy binary variables and took a weighted correlation approach to recover the latent correlation for ordinal pairs with more than three categories. Whereas, \cite{zhang2018high} arrived at an incorrect bridging function trying to tackle the general ordinal case. \cite{huang2021latentcor, yoon2021fast} provided an R package for speeding up numerical calculation of latent correlations between pairs of binary, ternary, truncated and continuous variables using a numerical interpolation approach. 

Leveraging the semiparametric Gaussian copula, we develop a unified framework for estimating the joint dependence structure of multivariate mixed data with binary, ordinal, truncated, and continuous variables. Building on this latent dependence model, we introduce regression (SGC-Reg), principal component analysis (SGC-PCA), and principal component regression (SGC-PCR) for latent representations of the observed data. Our main contributions are: i) new bridging results for general ordinal variables, ii) latent-space regression with asymptotic normality of the estimators, iii) a computationally efficient algorithm that reduces the complexity of all major steps, including asymptotic covariance estimation, from $O(n^4)$ to $O(n\log n)$, iv) methods for latent prediction and missing-data imputation, and v) latent-space PCA and PCR. The resulting framework is semiparametric, likelihood-free, and computationally efficient, while providing mutually consistent conditional modeling across mixed outcomes, automatic normalization across heterogeneous scales, invariance to monotone transformations, interpretable latent-scale effect sizes, and a principled way to address multicollinearity through latent PCR.

The rest of the paper is organized as follows. Section 2 reviews the semiparametric Gaussian copula model and develops handling of general ordinal variables. Section 3 introduces SGC-based regression, principal component analysis, and principal component regression, and derives the corresponding asymptotic results. Section 4 presents two methodological applications: latent variable prediction and missing-data imputation. Section 5 evaluates the proposed methods through simulation studies. Section 6 applies the framework to frailty-related variables and mortality in NHANES. Section 7 concludes with a discussion.




\section{Gaussian copula model}

Classical Gaussian model assumptions have been popular due to their computational simplicity. However, these assumptions can be too restrictive. As an alternative, \cite{liu2009nonparanormal} proposed non-Paranormal distribution (NPN) which can be seen as a semiparametric Gaussian Copula model. 

\theoremstyle{definition}
\begin{definition}{(Non-paranormal distribution)} A random vector $Z=(Z_1,\dots,Z_p)' \sim NPN_p(0,\Sigma,f)$, if there exist monotone transformation functions $f=(f_1,...,f_p)$ such that $L = (L_1, \dots, L_p) = f(Z)=(f_1(Z_1),\dots, f_p(Z_p)) \sim N(0,\Sigma)$, where $\Sigma_{jj}=1$ for $1 \le j \le p$. 
\end{definition}

The assumption on $\Sigma$ is made to ensure the identifiability of the distribution as shown in \cite{liu2009nonparanormal}.

\cite{fan2017high} introduced latent non-paranormal distribution which extended non-paranormal distribution to jointly model binary and continuous data. \cite{yoon2020sparse} introduced truncated variables using latent non-paranormal variables and \cite{quan2018rank} extended the distribution to ternary-continuous pairs and ternary-ternary pairs. We generalize these results even further and treat a (general) ordinal ($k$ ordered categories) case. We define Generalized Latent Non-paranormal (GLNPN) distribution with four mixed data types including continuous, truncated, (general) ordinal ($k$ ordered categories), binary variables. 

\theoremstyle{definition}
\begin{definition}{(Generalized latent non-paranormal distribution)} Suppose we observe a random vector $X=(X_c, X_t, X_o, X_b)'$, where  $X_c$ is $p_c$-dimensional continuous variable, $X_t$ is $p_t$-dimensional truncated, $X_o$ is $p_o$-dimensional ordinal ($j$-th ordinal variable has levels $\{0,1,\cdots,l_j-1\}$), and $X_b$ is $p_b$-dimensional binary variable, and $p = p_c+p_t+p_o+p_b$. We assume that there exist latent variables $Z=(Z_c,Z_t,Z_o,Z_b)'$ such that 
\begin{equation}
\begin{aligned}
 X_{cj} &= Z_{cj}, 1 \le j \le p_c\\
 X_{tj} &= Z_{tj} I(Z_{tj} > \delta_{tj}), 1 \le j \le p_t \\
 X_{oj} &= \sum_{r=0}^{l_j-1} r I(\delta_{ojr} \le Z_{oj} < \delta_{oj(r+1)}), 1 \le j \le p_o; \delta_{oj0}= -\infty, \delta_{ojl_j}=\infty \\
 X_{bj} &= I(Z_{bj} > \delta_{bj}), 1 \le j \le p_b 
\end{aligned}
\end{equation}
If $Z=(Z_c, Z_t, Z_o, Z_b)' \sim NPN(0,\Sigma,f)$, we denote that $X=(X_c, X_t, X_o, X_b)'\sim \mathrm{GLNPN}_p(0,\Sigma,f,\boldsymbol{\delta})$, where $\boldsymbol{\delta} = \{\delta_{tj}; j = 1, \cdots, p_t\} \cup (\cup_{j=1}^{p_o}\{\delta_{oj(r+1)}; r = 0, \cdots, l_j\})\cup \{\delta_{bj}; j=1, \cdots, p_b\}$, i.e., is the set containing cutoffs for truncated, ordinal and binary variables.
\end{definition}
\begin{figure}[htp]
    \centering
    \includegraphics[scale=0.5, trim=175 175 175 175]{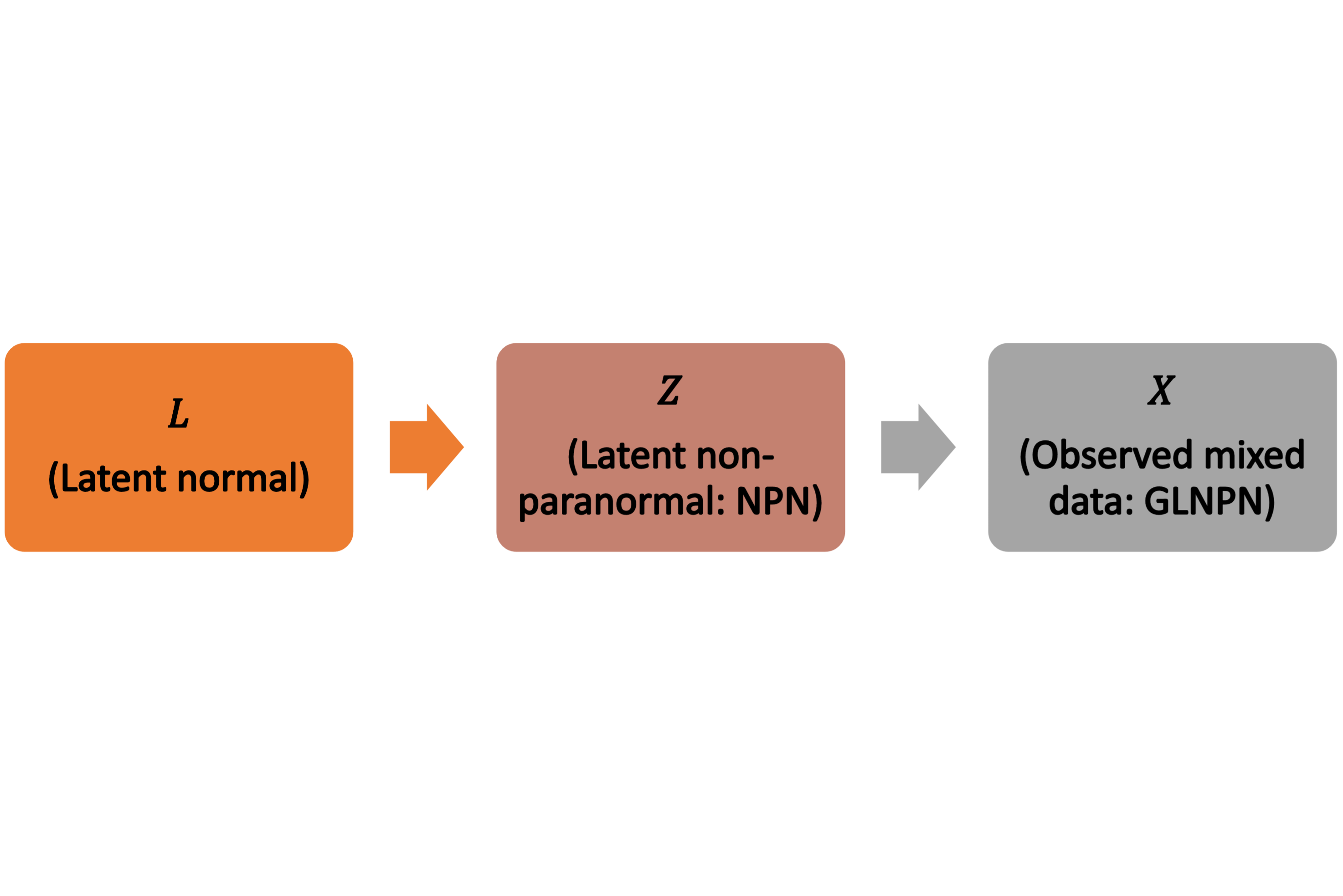}
    \caption{The data generation flow of GLNPN distribution}
    \label{fig:dataflow-glnpn}
\end{figure}

For notational simplicity, we will refer to observed continuous, truncated, ordinal, binary variables generated according to GLNPN distribution as CTOB-GLNPN variables. Figure \ref{fig:dataflow-glnpn} shows the flowchart of the data generation mechanism for observed GLNPN variables. Figure \ref{fig:illust-glnpn} shows an example of four observed CTOB-GLNPN variables generated via  monotone-transformation-then-truncation of latent bivariate normal variables.

\begin{figure}[htp]
    \centering
    \includegraphics[scale=0.105]{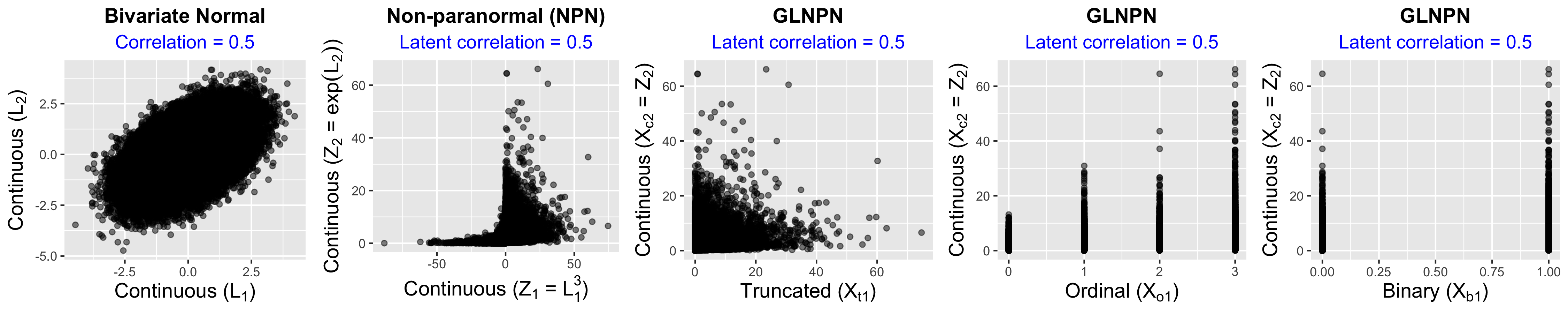}
    \caption{From left to right: (i) a scatterplot of  bivariate standard normal variables with correlation of $0.5$,  (ii) a continuous-continuous pair, (iii) a truncated-continuous pair, iv) an ordinal-continuous pair, (v) a binary-continuous pair}
    \label{fig:illust-glnpn}
\end{figure}

Cut-off parameters of GLNPN distribution suffers from identifiability issues  (discussed in details at \cite{fan2017high}). The joint probability mass function of the discrete component or the density of the truncated component only depends on the set of transformed cutoffs: $\mathbf{\Delta} = f(\boldsymbol{\delta}) = \{f_j(\delta_{tj}); j = 1, \cdots, p_t\} \cup (\cup_{j=1}^{p_o}\{f_j(\delta_{oj(k+1)}); k = 0, \cdots, l_j\})\\ \cup \{f_j(\delta_{bj}); j=1, \cdots, p_b\}= \{\Delta_{tj}; j = 1, \cdots, p_t\} \cup (\cup_{j=1}^{p_o}\{\Delta_{oj(k+1)}; k = 0, \cdots, l_j\})\\ \cup \{\Delta_{bj}; j=1, \cdots, p_b\}$. To emphasize this, we will generally refer to GLNPN distribution as $\mathrm{GLNPN}_p(0,\Sigma,f,\mathbf{\Delta})$. 

As a result of the identifiability constraints for the cutoffs, the binary and ordinal components of GLNPN distribution are marginally equivalent to the latent Gaussian distribution for binary and ordinal variables. This comes as no surprise as the discrete components does not have enough information to identify the marginal transformations. However, when we model the discrete component jointly with continuous and truncated variables, the class of GLNPN distributions becomes much larger than the class of latent Gaussian distributions \citep{yoon2020sparse, fan2017high}. The marginal transformations from continuous and truncated variables make the joint distribution of mixed variables more flexible and potentially can provide a substantial advantage to better explain the association between mixed type of variables. 

\subsection{Estimation of Correlation Matrix}\label{sec: estimation}

\subsubsection{Bridging functions}
A few authors (including \cite{fan2017high, yoon2020sparse, quan2018rank}) have considered Kendall's $\tau$ rank correlation to estimate latent correlation matrix $\Sigma$ for different combinations of continuous, truncated, ternary, and binary variables. For two independent copies $X_i$, $X_i'$ of the random vector $X$, the population-level Kendall's $\tau$ is defined as  
\begin{equation}
\tau_{jk} = E[sgn\{(X_{ij}-X_{i'j})(X_{ik} - X_{i'k})\}]
\end{equation}

We can calculate a sample Kendall's tau between $j$-th and $k$-th variable as follows:
\begin{equation}
\hat{\tau}_{jk} = \frac{2}{n(n-1)} \sum_{1\le i < i' < n} sgn\{(X_{ij}-X_{i'j})(X_{ik} - X_{i'k})\}
\end{equation}

The construction of Kendall's $\tau$ reveals that it is invariant under a monotone transformation. 

Under GLNPN, the population-level Kendall's Tau can be related to the latent correlation $\Sigma_{jk}$ through a one-to-one bridging function $F$ which for non-continuous components will depend on cutoffs as $\tau_{jk}=F(\Sigma_{jk})$. The estimated latent correlation is then obtained as $\hat{\Sigma}_{jk} = F^{-1}(\hat{\tau}_{jk})$.

\cite{fan2017high} calculated the bridging function for a pair of binary and continuous variables, \cite{yoon2020sparse} showed how to deal with truncated variables in addition to continuous and binary. \cite{quan2018rank} provided formulas for bridging functions for ternary variables and for general ordinal-continuous pairs of variables. \cite{feng2019high} broke an ordinal variable into multiple dummy binary variables and took a weighted correlation approach to recover the latent correlation for ordinal pairs with more than three categories, whereas, \cite{zhang2018high} arrived at an incorrect bridging function trying to tackle the general ordinal case. We summarize the references to the correct bridging functions for all possible pairs of variables in Table \ref{tab:ref-bridge}. Our first contribution is to derive bridging functions for the general ordinal variable with an arbitrary number of ordinal levels. We derive bridging functions for ordinal-truncated, ordinal-ordinal, and ordinal-binary pairs in Theorem $\ref{thm:bridge}$ and prove the invertibility of the functions in Theorem \ref{thm:bridge-inv}. 

\begin{table}[H]
\begin{tabular}{|l|llll|}
\hline
Type                    & Continuous & Truncated & Ordinal$^{*}$ & Binary \\ \cline{2-5} 
\hline
\multicolumn{1}{|l|}{Continuous} & \cite{liu2009nonparanormal}  & \cite{yoon2020sparse}  & \cite{quan2018rank}   & \cite{fan2017high}  \\
\multicolumn{1}{|l|}{Truncated} & \cite{yoon2020sparse}  & \cite{yoon2020sparse}  & Theorem \ref{thm:bridge} & \cite{yoon2020sparse}  \\
\multicolumn{1}{|l|}{Ordinal} & \cite{quan2018rank}   & Theorem \ref{thm:bridge}  & Theorem \ref{thm:bridge}  &  Theorem \ref{thm:bridge} \\
\multicolumn{1}{|l|}{Binary} & \cite{fan2017high}  & \cite{yoon2020sparse}  &  Theorem \ref{thm:bridge} & \cite{fan2017high}  \\ \hline
\end{tabular}
\caption{The reference of bridging functions for all possible pairs of variables.\\ \footnotesize{\emph{$^{*}$Ordinal cases for only three categories were derived in \cite{quan2018rank}}}}
\label{tab:ref-bridge}
\end{table}

For completeness, the following theorem also includes the previously established bridging functions.  

\begin{theorem}
\label{thm:bridge}
Let $X_j, X_k$ be two GLNPN variables, then the population Kendall's Tau is related to the latent correlation as follows: $\tau_{jk} = F(\Sigma_{jk})$, where $F$ additionally depends on the cutoffs ${\Delta_j}, {\Delta_k}$ for non-continuous components. The bridging functions corresponding to all pairs of variables are as follows

\begin{align}
     F_{\rm cc}(\rho) & = \dfrac{2}{\pi} \sin^{-1}(\rho) \nonumber\\
     F_{\rm bb}(\rho; \Delta_j, \Delta_k)&  = 2\left\{\Phi_2( \Delta_j,  \Delta_{k}; \rho)-\Phi( \Delta_j)\Phi( \Delta_{k})\right\} \nonumber\\
     F_{\rm cb}(\rho; \Delta_j) & = 4\Phi_2( \Delta_j, 0; \rho/\sqrt{2})-2\Phi( \Delta_j) \nonumber\\
     F_{\rm tb}(\rho; \Delta_j, \Delta_k) & =
    2 \Phi_3\left(\Delta_k, -\Delta_j, 0; S_{3a}(\rho) \right)
    -2 \Phi_3\left(\Delta_k, -\Delta_j, 0; S_{3b}(\rho) \right) \nonumber\\
    F_{\rm ct}(\rho; \Delta_j) & = -2 \Phi_2 (-\Delta_j,0; 1/\sqrt{2} ) +4\Phi_3 \left(-\Delta_j,0,0; S_3(\rho)\right) \nonumber\\
    F_{\rm tt}(\rho; \Delta_j, \Delta_k)  & = -2 \Phi_4 (-\Delta_j, -\Delta_k, 0,0; S_{4a}(\rho)) + 2 \Phi_4 (-\Delta_j, -\Delta_k, 0,0; S_{4b}(\rho)) \nonumber\\
    F_{\rm co}(\rho; \Delta_j) & = \sum_{r=1}^{l_j-1} (4 \Phi_3 (\Delta_{jr}, \Delta_{j(r+1)},0; S_{3c}(\rho)) - 2 \Phi(\Delta_{jr})\Phi(\Delta_{j(r+1)})) \nonumber\\
    F_{\rm oo}(\rho; \Delta_j, \Delta_k) &= 2 \sum_{r=1}^{l_j-1}\sum_{s=1}^{l_k-1}[\Phi_2(\Delta_{jr},\Delta_{ks}; \rho)\{ \Phi_2(\Delta_{j(r+1)},\Delta_{k(s+1)};\rho) - \Phi_2(\Delta_{j(r+1)},\Delta_{k(s-1)}; \rho)\}] - \nonumber \\
& 2 \sum_{r=1}^{l_j-1}\Phi({\Delta_{jr}})\Phi_2(\Delta_{j(r+1)},\Delta_{k(l_k-1)}; \rho) \nonumber \\
 F_{\rm ob}(\rho; \Delta_j, \Delta_k) &= 2 \sum_{r=1}^{l_j-1}[\Phi_2(\Delta_{jr},\Delta_{k}; \rho)\Phi(\Delta_{j(r+1)}) - \Phi({\Delta_{jr}})\Phi_2(\Delta_{j(r+1)},\Delta_{k}; \rho)] \nonumber \\ 
F_{\rm to}(\rho; \Delta_j, \Delta_k) & =2 \Phi_3( (\Delta_{k(l_k-1)}, -\Delta_j, 0); S_{3a}(\rho)) - 2 \sum_{r=1}^{l_{j}-1}[\Phi_4((\Delta_{k(r+1)}, \Delta_{kr}, -\Delta_j, 0); S_{5}(\rho))- \nonumber \\
& \Phi_4((\Delta_{k(r-1)}, \Delta_{kr}, -\Delta_j, 0); S_{5}(\rho))]\label{eqn:bridge-thm1}
\end{align}
with 
\begin{equation}
    \begin{split}\nonumber
        S_{3a}(\rho)& =\begin{pmatrix} 1 & 0 & \frac{\rho}{\sqrt{2}} \\
0 & 1 & \frac{1}{\sqrt{2}} \\
\frac{\rho}{\sqrt{2}}  & \frac{1}{\sqrt{2}} & 1
\end{pmatrix}, \qquad
S_{3b}(\rho)=\bpm
1 & 0 & -1/\sqrt{2} \\
0 & 1 & -\rho/\sqrt{2}\\
-1/\sqrt{2} & -\rho/\sqrt{2} & 1
\epm \\
S_{3c}(\rho) &= \begin{pmatrix}
    1 & 0 & \rho/\sqrt{2}\\
    0 & 1 & -\rho/\sqrt{2}\\
    \rho/\sqrt{2} & -\rho/\sqrt{2} & 1
\end{pmatrix}, \qquad
S_3(\rho) = \bpm
        1 & -\rho & -\rho/\sqrt{2}\\
        -\rho & 1 & \frac{1}{\sqrt{2}}\\
        -\rho/\sqrt{2} & \frac{1}{\sqrt{2}} & 1
    \epm \\
S_{4a}(\rho) &= \bpm
1 & 0 & 1/\sqrt{2} & -\rho/\sqrt{2}\\
0& 1 & -\rho/\sqrt{2}& 1/\sqrt{2}\\
1/\sqrt{2}& -\rho/\sqrt{2} & 1& -\rho\\
-\rho/\sqrt{2}& 1/\sqrt{2}& -\rho & 1
\epm\\
S_{4b}(\rho) & = \bpm
    1 & \rho & 1/\sqrt{2} & \rho/\sqrt{2}\\
    \rho & 1 & \rho/\sqrt{2}& 1/\sqrt{2}\\
    1/\sqrt{2}& \rho/\sqrt{2} & 1& \rho \\ \rho/\sqrt{2}& 1/\sqrt{2}& \rho  & 1
\epm, \quad
    S_{5}(\rho)  = \begin{pmatrix} 1 & 0 &  0 & \frac{\rho}{\sqrt{2}} \\
0 & 1 & -\rho & -\frac{\rho}{\sqrt{2}} \\
0  & -\rho & 1 & \frac{1}{\sqrt{2}} \\
\frac{\rho}{\sqrt{2}} & -\frac{\rho}{\sqrt{2}} & \frac{1}{\sqrt{2}} & 1
\end{pmatrix}
\end{split}
\end{equation}
where $\Phi$ denotes the CDF of a univariate standard normal random variable, $\Phi_d(\dots,S)$ denotes the CDF of a $d$-variate standard normal random variable with correlation matrix $S$. For notational simplicity, we use ${\Phi}_2(\dots,\rho)$ as the standard bivariate normal cdf with correlation $\rho$.
\end{theorem}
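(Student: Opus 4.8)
The plan is to exploit the monotone-transformation invariance of Kendall's $\tau$ to pass from the observed variables to the latent Gaussian vector $L = f(Z) \sim N(0,\Sigma)$, and then to express every sign statistic as a signed sum of indicators of half-space and slab events for $L$ together with an independent copy $L'$. Writing $\tau_{jk} = E[\mathrm{sgn}(X_j - X_j')\,\mathrm{sgn}(X_k - X_k')]$ for an independent pair $(X,X')$, the factorization $\mathrm{sgn}(ab)=\mathrm{sgn}(a)\mathrm{sgn}(b)$ reduces each bridging computation to an expectation over the Gaussian vector built from the relevant coordinates of $L$ and $L'$. A symmetrization step, swapping the roles of the two independent copies (under which each sign flips), identifies the two mirror-image contributions and produces the leading factor of $2$ appearing throughout the statement.

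Second, I would record, type by type, the decomposition of each sign into latent events, using that the cutoffs $\delta$ enter only through the transformed cutoffs $\Delta = f(\delta)$ and are invariant in the relevant inequalities. For a continuous coordinate, $\mathrm{sgn}(X_{cj}-X_{cj}') = \mathrm{sgn}(L_j - L_j')$; for a binary coordinate, $\mathrm{sgn}(X_{bk}-X_{bk}') = I(L_k>\Delta_k) - I(L_k'>\Delta_k)$; for a truncated coordinate one splits on the four threshold configurations, retaining $\mathrm{sgn}(L_k - L_k')$ only on the event that both copies exceed $\Delta_k$; and for an ordinal coordinate I would use the layer identity $I(X_{oj}>X_{oj}') = \sum_{r=1}^{l_j-1} I(L_j \ge \Delta_{jr})\,I(\Delta_{j(r-1)} \le L_j' < \Delta_{jr})$ and its mirror image to write the ordinal sign as a sum over levels of products of slab indicators in $L_j$ and $L_j'$.

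Third, I would take expectations. Because $(L,L')$ is jointly Gaussian and the two copies are independent, each product of indicators becomes a multivariate normal orthant or box probability, while each retained $\mathrm{sgn}(L-L')$ is handled by standardizing the difference: $(L_j - L_j')/\sqrt{2}$ is $N(0,1)$ with $\mathrm{Cov}\big((L_j-L_j')/\sqrt 2, L_k\big) = \rho/\sqrt2$, $\mathrm{Cov}\big((L_j-L_j')/\sqrt2, L_j\big)=1/\sqrt2$, and $\mathrm{Cov}\big((L_j-L_j')/\sqrt2,(L_k-L_k')/\sqrt2\big)=\rho$. These covariance computations are exactly what generate the entries $\rho/\sqrt2$, $1/\sqrt2$, and $-\rho$ of the matrices $S_{3a},S_{3b},S_3,S_{4a},S_{4b},S_5$; assembling the standardized vector of directions in each case and reading off its correlation matrix yields the stated $\Phi_3$ and $\Phi_4$ arguments. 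The continuous-continuous, binary, and truncated cases should reproduce the previously published formulas, which serves as a built-in correctness check on the bookkeeping.

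The main obstacle is the general ordinal case, namely $F_{\mathrm{oo}}$, $F_{\mathrm{ob}}$, and $F_{\mathrm{to}}$ for an arbitrary number of levels. The naive expansion produces a double sum over all ordered pairs of levels $(r,s)$, and the real work is to collapse it to the compact single- or double-indexed form in the statement by telescoping adjacent cumulative probabilities such as $\Phi_2(\Delta_{j(r+1)},\cdot;\rho)-\Phi_2(\Delta_{j(r-1)},\cdot;\rho)$ while tracking which level comparisons contribute $+1$ versus $-1$. For $F_{\mathrm{to}}$ the truncated coordinate adds a fourth direction, so one must verify that the conditional-difference event combines with the ordinal slabs to give precisely the four-dimensional correlation matrix $S_5$; this is the step at which an index or sign error most easily creeps in, and presumably where the earlier incorrect treatment of the general ordinal case went astray. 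Invertibility of all these functions I would establish separately, as claimed in Theorem~\ref{thm:bridge-inv}.
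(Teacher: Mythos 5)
Your proposal follows essentially the same route as the paper's proof: pass to the latent Gaussian via monotone invariance, write $\tau_{jk}=2\{P(\text{both greater})-P(\text{one greater, one smaller})\}$ by exchangeability of the two independent copies, decompose each ordinal/binary/truncated sign into level-indexed slab and half-space events, convert products of indicators into multivariate normal box probabilities with the $\sqrt{2}$-standardized differences supplying the entries of $S_{3a},\dots,S_{5}$, and telescope the resulting sums into the compact forms. The only part you defer --- the explicit telescoping and sign bookkeeping for $F_{\rm oo}$, $F_{\rm ob}$, and $F_{\rm to}$ --- is precisely where the paper's supplement spends its effort, so your plan is sound and matches the published argument.
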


\begin{proof}
The derivations of $F_{cc}, F_{bb}, F_{cb}, F_{tb}, F_{ct}, F_{tt}, F_{co}$ have been previously shown in literature as reported Table \ref{tab:ref-bridge}. Novel derivations of $F_{oo}, F_{ob}, F_{to}$ are provided in  \ref{sec: proofs}. To the best of our knowledge, this theorem is the first result deriving analytical forms of pairwise bridging functions for ordinal-truncated, ordinal-ordinal,  ordinal-binary pairs when the ordinal variable has an arbitrary number of levels.
\end{proof}

\begin{theorem}
\label{thm:bridge-inv}
For constants $\Delta_j, \Delta_k$, the bridging functions $F(\rho)$ in Theorem $\ref{thm:bridge}$ are strictly increasing in $\rho \in (-1,1).$ Hence, the inverse, $F^{-1}(\tau_{jk})$ exists.
\end{theorem}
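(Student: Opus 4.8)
The plan is to show that each bridging function $F$ is continuously differentiable on $(-1,1)$ with $F'(\rho)>0$ throughout; strict monotonicity then immediately yields the existence of $F^{-1}$ on the range of $F$. For the functions $F_{\rm cc}, F_{\rm bb}, F_{\rm cb}, F_{\rm tb}, F_{\rm ct}, F_{\rm tt}, F_{\rm co}$ this was already established in the works cited in Table~\ref{tab:ref-bridge}, so the real task is to handle the three new functions $F_{\rm oo}, F_{\rm ob}, F_{\rm to}$; I would nonetheless re-derive the base cases $F_{\rm cc}$ and $F_{\rm bb}$ to expose the mechanism that drives all of them.

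The single computational engine is Plackett's differentiation identity. In its bivariate form it reads $\partial_\rho \Phi_2(a,b;\rho)=\phi_2(a,b;\rho)$, where $\phi_2(\cdot,\cdot;\rho)$ denotes the bivariate standard normal density; this is strictly positive for finite $a,b$ and settles $F_{\rm cc}$ and $F_{\rm bb}$ at once. For the higher-dimensional terms I would use the general version: the derivative of $\Phi_d(\mathbf{x}; R)$ with respect to an off-diagonal entry $R_{ab}$ equals $\phi_2(x_a,x_b;R_{ab})$ times the conditional $(d-2)$-variate normal CDF of the remaining coordinates given that coordinates $a,b$ sit at $x_a,x_b$. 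Since that conditional CDF is nonnegative (and strictly positive in the interior), every partial derivative $\partial_{R_{ab}}\Phi_d$ is nonnegative, and the total derivative follows from the chain rule $F'(\rho)=\sum_{a<b}\partial_{R_{ab}}\Phi_d \cdot (dR_{ab}/d\rho)$ applied to each $\Phi_d$ term, with the entries of $S_{3a},S_{3b},S_3,S_{4a},S_{4b},S_5$ supplying the factors $dR_{ab}/d\rho\in\{\pm 1,\ \pm 1/\sqrt2,\ 0\}$.

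The main obstacle is the sign bookkeeping. Because $F_{\rm oo}, F_{\rm to}, F_{\rm tb}, F_{\rm tt}$ are differences (and, for the ordinal cases, telescoping sums over adjacent cutoffs) of Gaussian CDFs whose correlation matrices depend on $\rho$ through entries of opposite sign, term-by-term positivity of $F'(\rho)$ is not automatic. I would resolve this in two complementary ways. First, for $F_{\rm oo}$ and $F_{\rm to}$ I would carry out the differentiation explicitly and exploit the adjacency structure: the contributions from consecutive cutoffs $\Delta_{jr},\Delta_{j(r+1)}$ telescope, collapsing the signed sum into a sum of bivariate (respectively conditional higher-dimensional) normal densities with manifestly positive weights, and the analogous reindexing turns $F_{\rm ob}'$ into a positive combination of densities $\phi_2(\Delta_{jr},\Delta_k;\rho)$. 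Second, to fix the global sign cleanly I would lean on the probabilistic reading of $F$: each bridging function equals $P(\text{concordance})-P(\text{discordance})$ for two independent copies of the latent pair pushed through the coordinatewise (weakly) monotone GLNPN maps, and the bivariate Gaussian copula is strictly increasing in the concordance (positive-quadrant-dependence) order as $\rho$ grows on $(-1,1)$. Since weakly monotone transformations preserve concordance, $\tau=F(\rho)$ inherits strict monotonicity, while the presence of a continuous latent coordinate keeps the relevant densities strictly positive and rules out degeneracy from ties. Combining the explicit telescoped derivative with this ordering argument delivers $F'(\rho)>0$ on $(-1,1)$, and hence invertibility, for every pair type.
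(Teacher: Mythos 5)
Your proposal is correct and follows essentially the same route as the paper: differentiate each bridging function using the Plackett-type identity and its multivariate extension (the paper invokes Lemma A1 of Yoon et al.\ for the $\Phi_d$ terms), use the vanishing of the partial derivatives at the infinite boundary cutoffs, and telescope the signed sums over adjacent cutoffs into manifestly positive combinations of Gaussian densities times rectangle probabilities --- exactly the structure of the paper's Cases 1--3, though be aware that for $F_{\rm to}$ the collapse is not fully clean and the paper must additionally discard some positive terms and bound the remainder below. The supplementary concordance-ordering argument you sketch is not used in the paper and is not needed here; if you do include it, note that the strictness step is the delicate part for purely discrete pairs, where the preservation of the PQD order under weakly monotone maps gives only weak monotonicity without an extra argument at the finite cutoffs.
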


\begin{proof}
We present the proofs of the invertibility of  $F_{ob}, F_{oo}, F_{to}$ in \ref{sec: proofs}. The invertibility of the other bridging functions have been shown in \cite{fan2017high, quan2018rank, yoon2020sparse}. 
\end{proof}

Theorem \ref{thm:bridge} shows that the bridging function depends on the cutoffs for pairs involving binary, ordinal and truncated variables. From the observed data, we can estimate the cutoffs through the method of moments as follows: 

\begin{equation}
\begin{split}
\textrm{Binary: } \widehat{\Delta}_j & =\Phi^{-1}\left(\frac{\sum_{i=1}^{n}I(X_{ij}=0)}{n}\right)\\
\textrm{Ordinal: } \widehat{\Delta}_{jr} & = \Phi^{-1}\left(\frac{\sum_{i=1}^{n}I(X_{ij}<=(r-1))}{n}\right), r=1, \dots, l_{j}-1\\
\textrm{Truncated: } \widehat{\Delta}_j & =\Phi^{-1}\left(\frac{\sum_{i=1}^{n}I(X_{ij}=0)}{n}\right)
\end{split}
 \label{eqn:cutoff-est}
\end{equation}

Next, we can plug-in estimated cutoffs in  bridging functions from \eqref{eqn:bridge-thm1}, so bridging functions now only depend on latent correlations. After bridging, the correlation matrix is formed as $\hat{\Sigma} = (\hat{\Sigma}_{jk})$. Note that the estimated matrix is not guaranteed to be positive semi-definite. So, we need to perform an extra step and find the nearest positive-definite correlation matrix \citep{higham2002computing}. In Algorithm \ref{alg: estimation}, we put together all steps of our estimation procedure.

\begin{singlespace}
\begin{algorithm}[h]
\caption{GLNPN estimation algorithm}\label{alg: estimation}
\begin{algorithmic}[1]
\State Input: Observed data, $X_i= (X_{ic}, X_{it}, X_{io}, X_{ib}), i=1, \dots, n$
\phase{Estimating cutoffs}
\For{$j$ in $\{t,o,b\}$}
    \State Estimate the set of cutoffs $\widehat{\Delta}_j$ from \eqref{eqn:cutoff-est} and store them
\EndFor
\phase{Inverting bridging functions}
\For{$j$ in $\{c,t,o,b\}$}
    \For{$k \ne j$}
        \State Calculate sample Kendall's Tau: $\widehat{\tau}_{jk}$
        \State Get the appropriate bridging function $F_{jk}$ and plug-in the estimated cutoffs 
        \State Obtain $\widehat{\Sigma}_{jk}=F^{-1}_{jk}(\widehat{\tau}_{jk}) = \textrm{argmin}_{\rho \in (-1,1)} (F_{jk}(\rho) -\widehat{\tau}_{jk})^2$
    \EndFor
\EndFor
\phase{Getting nearest PD correlation matrix}
\State Get the initial estimate of the latent correlation matrix $\Sigma$ as follows:
\begin{equation*}
    \begin{split}
    \widehat {\Sigma} = \begin{pmatrix}
            \hat{\Sigma}_{cc} &  \hat{\Sigma}_{ct} &  \hat{\Sigma}_{co} & \hat{\Sigma}_{cb} \\
            \hat{\Sigma}_{tc} &  \hat{\Sigma}_{tt} &  \hat{\Sigma}_{to} & \hat{\Sigma}_{tb} \\
            \hat{\Sigma}_{oc} &  \hat{\Sigma}_{ot} &  \hat{\Sigma}_{oo} & \hat{\Sigma}_{ob} \\
            \hat{\Sigma}_{bc} &  \hat{\Sigma}_{bt} &  \hat{\Sigma}_{bo} & \hat{\Sigma}_{bb}
        \end{pmatrix}
    \end{split}
\end{equation*}
\State Use \emph{nearPD} \citep{higham2002computing} function in R to find the nearest positive definite correlation matrix of $\widehat{\Sigma}$ as our final estimate. 
\end{algorithmic}
\end{algorithm}
\end{singlespace}



\section{SGC-Reg: Semiparametric Gaussian Copula Regression Model} \label{sec: SGC-Reg}

In this section, we introduce Semiparametric Gaussian Copula Regression Model (SGC-Reg), establish asymptotic theory for the regression estimators, and discuss advantages of SGC-Reg over relevant traditional regression frameworks. 

A classical regression model for a continuous outcome $Y_i$ is typically written as
\begin{equation}
Y_{i} = \sum_{j=1}^{p} X_{ji}\beta_{j}  + \epsilon_i, \quad i=1, \cdots, n
\end{equation}

The simplest for understanding case is when both the outcome and all covariates are standard normal random variables. In that case, the simple linear regression conceptually assumes that both outcomes and predictors are on the same additive scale and tries to explain the variability of an outcome via variability of predictors. Various transformations of outcome/predictors can be used to deal with possible deviations from normality and symmetry. When outcome is not continuous alternative models such as probit, truncated regression, and other probit-like models have been proposed. However, they often lack the interpretability appeal of a simple linear regression model. 

We now introduce \emph{Semiparametric Gaussian Copula Regression for Mixed Data (SGC-Reg)}  and establish key asymptotical results for the estimators of the regression parameters. SGC-Reg operates and connects underlying continuous normal latent variables that generate observed mixed types variables. SGC-Reg can be seen as a unifying alternative to the linear and probit-like regressions for mixed types of outcomes and predictors.  

We define Semiparametric Gaussian Copula Regression for Mixed Data as follows.

\begin{equation}
\left \{
\begin{array}{ll}
\mbox{Observed variables:} & (Y_1, \mathbf{X}_1), \dots, (Y_n, \mathbf{X}_n) \overset{i.i.d}{\sim} GLNPN_{p+1}(0,\Sigma, (f_Y, f_X),\mathbf{\Delta})
\\
\mbox{Latent variables:} & (Z^Y_1, \mathbf{Z}^X_1), \dots, (Z^Y_n, \mathbf{Z}^X_n)\quad
\overset{i.i.d}{\sim} NPN_{p+1}(0,\Sigma,(f_X, f_Y))\\
\\
\mbox{SGC-Reg for latent variables:} & f_{Y}(Z^Y_i) =\sum_{k \in \{c,t,o,b\}} \sum_{j=1}^{p_k} f_{X}(Z_{kji}^X)\beta_{kj} + \epsilon_{i}, \quad i=1,\dots, n.\\
\end{array}\label{eqn:SGC-Reg}
\right .
\end{equation}

Thus, SGC-Reg is a simple linear regression on a latent space with the outcome $f_{Y}(Z^Y_i)$ and predictors $f_{X}(Z^X_i)$, which, according to GLNPN, are jointly normal: $(f_{Y}(Z^Y_i), f_{X}(Z^X_i)) \overset{i.i.d}{\sim} N_{p+1}(0,\Sigma)$ with the correlation matrix $\Sigma$ that assumes the following partition: 

\[\begin{bmatrix}
\Sigma_{YY} & \Sigma_{YX}\\
\Sigma_{XY} & \Sigma_{XX}\\
\end{bmatrix}.\]

In SGC-Reg, we also assume that $\epsilon_i$ are i.i.d. from $N(0, 1- \Sigma_{YX} \Sigma_{XX}^{-1} \Sigma_{XY})$. 

It immediately follows that the regression coefficient $\bm{\beta}= \Sigma_{XX}^{-1} \Sigma_{XY}$. To estimate $\bm{\beta}$, we propose to use the estimate of $\Sigma$ obtained via bridging as described in Section \ref{sec: estimation}. Let $\hat{\Sigma}_n$ be the estimated latent correlation matrix for the model  \eqref{eqn:SGC-Reg}. Then, the estimates of the regression coefficient is given by $\hat{\beta}_n = {\hat{\Sigma}}_{n_{XX}}^{-1} {\hat{\Sigma}}_{n_{XY}}$. 

Previous work on bridging  \citep{fan2017high, quan2018rank, yoon2020sparse} have shown the consistency of the latent correlation estimators. In the next theorem, we prove the stronger result of asymptotic normality for the latent correlation estimators. We, then, build on that and establish the asymptotic normality of the SGC-Reg regression coefficient estimators. This novel contribution enables us to calculate fast asymptotic confidence intervals instead of relying on slow resampling based approaches. To formulate the theorem, we will need the following notations: let $vec(A)$ and $vecl(A)$ denote the vectorized matrix $A$ and vector of lower-triangular elements of matrix $A$, respectively. Thus,  $vecl(\hat{\Sigma}_n)$ and $vec(\hat{\Sigma}_n)$ are vectors of length $\frac{p(p-1)}{2}$ and $p^2$, respectively. 

\begin{theorem}
\label{thm: asymp}
Suppose the following assumptions \citep{eicker1963asymptotic} hold true: (i) the rank of $P_n = \hat{\Sigma}_{nXX}$ is $p$ and (ii) $\frac{\lambda_{\textrm{max}}(P_n)}{n({\lambda_{\textrm{min}}(P_n)})^2} \rightarrow 0$ as $n \rightarrow \infty$, where $\lambda_{\textrm{min}}(\cdot)$ and $\lambda_{\textrm{max}}(\cdot)$ denote the smallest and largest eigenvalues of a matrix respectively. 

Then, 
\begin{itemize}
\item[(i)] $\sqrt{n}(vecl(\hat{\Sigma}_n) - vecl(\Sigma)) \overset{d}{\to} N_{p(p-1)/2}(0, V_{\Sigma}), \mbox{ as } n \to \infty$. 

\item[(ii)] $\sqrt{n}(\hat{\beta}_n - \beta)\overset{d}{\to} N_{p}(0, V_{\beta}), \mbox{ as } n \to \infty$. 
\end{itemize}
\end{theorem}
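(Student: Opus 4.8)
The plan is to prove both statements by the delta method, anchored to a single multivariate central limit theorem for the statistics that feed the bridging estimator. The key observation is that each $\hat{\Sigma}_{n,jk} = F_{jk}^{-1}(\hat{\tau}_{jk})$ is a smooth function of two kinds of averages over the i.i.d. sample: the sample Kendall's $\tau$'s, which are U-statistics of degree two, and the empirical proportions $n^{-1}\sum_i I(X_{ij}\le r-1)$ that enter the estimated cutoffs $\hat{\Delta}_{jr}=\Phi^{-1}(\cdot)$ in \eqref{eqn:cutoff-est}. First I would stack all of these quantities into one vector $\hat{W}_n$ and show that $\sqrt{n}(\hat{W}_n - W)$ is asymptotically mean-zero normal. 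Joint normality holds because every coordinate is a U-statistic computed from the same sample (the proportions being degree-one U-statistics), so Hoeffding's multivariate CLT for U-statistics delivers the limiting covariance, whose blocks are the usual projection covariances and cross-covariances.

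Part (i) then follows by the delta method applied to the map $\hat{W}_n \mapsto vecl(\hat{\Sigma}_n)$. This map first sends proportions to cutoffs through $\Phi^{-1}$, which is smooth wherever the argument lies in $(0,1)$, and then applies $F_{jk}^{-1}(\cdot\,;\Delta_j,\Delta_k)$ to each $\hat{\tau}_{jk}$. The differentiability of $F_{jk}^{-1}$ is where Theorem \ref{thm:bridge-inv} does the work: since each $F_{jk}$ is strictly increasing and continuously differentiable in $\rho$ (being built from multivariate normal CDFs, which are $C^\infty$), the inverse function theorem gives that $F_{jk}^{-1}$ is continuously differentiable with derivative $1/F_{jk}'(\Sigma_{jk})\neq 0$; the same smoothness of $F_{jk}$ in its cutoff arguments lets me propagate the variability of $\hat{\Delta}$ through the chain rule. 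Collecting the Jacobian $J$ of the composite map yields $V_\Sigma = J\,\mathrm{Cov}(\text{limit of }\sqrt{n}\,\hat{W}_n)\,J^\top$. One extra point to dispatch is the nearest-PD projection in Algorithm \ref{alg: estimation}: because the true $\Sigma$ is positive definite and $\hat{\Sigma}_n$ is consistent, with probability tending to one $\hat{\Sigma}_n$ already lies in the interior of the PD-correlation cone where the projection acts as the identity, so it does not alter the limit law.

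For part (ii) I would compose the conclusion of part (i) with the smooth map $g(\Sigma)=\Sigma_{XX}^{-1}\Sigma_{XY}$. Under assumption (i), $P_n=\hat{\Sigma}_{nXX}$ is invertible and converges to the invertible $\Sigma_{XX}$, so $g$ is well defined and continuously differentiable in a neighborhood of $\Sigma$; its Jacobian comes from $d(\Sigma_{XX}^{-1})=-\Sigma_{XX}^{-1}(d\Sigma_{XX})\Sigma_{XX}^{-1}$, giving the linear map $\partial g = \Sigma_{XX}^{-1}\bigl(d\Sigma_{XY} - (d\Sigma_{XX})\beta\bigr)$ in the entries of $vecl(\Sigma)$. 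A further delta-method step, equivalently invoking the Eicker regularity conditions (i)--(ii) to control the eigenvalues of $P_n$ and secure the least-squares-type normal limit, then produces $\sqrt{n}(\hat{\beta}_n-\beta)\overset{d}{\to}N_p(0,V_\beta)$ with $V_\beta = G\,V_\Sigma\,G^\top$, where $G$ is the Jacobian of $g$ restricted to the lower-triangular coordinates.

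The main obstacle I anticipate lies in the first step rather than the final algebra: carefully assembling the joint limiting covariance of the Kendall's-$\tau$ U-statistics together with the cutoff proportions, and verifying differentiability of the bridging inverses simultaneously in the correlation and the cutoff arguments so that the chain rule is valid. Smoothness of the multivariate normal CDFs makes differentiability itself routine, but obtaining explicit, computable forms of the U-statistic projection kernels, which is what is needed to make $V_\Sigma$ usable and hence to deliver the promised fast confidence intervals, is the part that will demand the most care.
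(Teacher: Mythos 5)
Your proposal is correct and shares the paper's backbone --- a U-statistic CLT for the Kendall's $\tau$'s followed by two applications of the delta method, first through the inverse bridging functions and then through the map $\Sigma \mapsto \Sigma_{XX}^{-1}\Sigma_{XY}$ --- but it diverges from the paper's proof in three substantive ways. First, you stack the empirical cutoff proportions from \eqref{eqn:cutoff-est} into the joint CLT as degree-one U-statistics and propagate their variability through the chain rule; the paper instead treats the estimated cutoffs as fixed plug-ins and omits their contribution to $V_\Sigma$, a choice it explicitly acknowledges in Section \ref{sec: sim} as the likely cause of the observed undercoverage and defers to future work. Your version is therefore the more complete asymptotic statement, at the cost of a larger and messier Jacobian. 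Second, the paper routes the delta method through the $\gamma(A)=vecl(\log A)$ reparametrization of \citep{archakova2018new} in order to differentiate with respect to a structured (correlation) matrix, writing $V_\Sigma=(\mathbb{D}_g\Gamma)^T V_\gamma \mathbb{D}_g\Gamma$ and $V_\beta=(\mathbb{D}_\beta H_p\mathbb{D}_g\Gamma)^T V_\gamma \mathbb{D}_\beta H_p\mathbb{D}_g\Gamma$; you differentiate entrywise directly, which is equally valid and arguably simpler. Third, you dispose of the nearest-PD projection in Algorithm \ref{alg: estimation} by noting it is eventually the identity on the event $\{\hat{\Sigma}_n \mbox{ is PD}\}$, a point the paper's proof does not address at all. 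What the paper's proof contributes that yours only flags as "the part that will demand the most care" is the explicit, computable form of the limiting covariance: the conditional-expectation (H\'ajek-projection) rewriting of $E[\mathrm{sgn}((X_{i1}-X_{i2})(X_{j1}-X_{j2}))\,\mathrm{sgn}((X_{k2}-X_{k3})(X_{l2}-X_{l3}))]$ as $E[H_{ij}(X_{i2},X_{j2})H_{kl}(X_{k2},X_{l2})]$, which reduces the cost of estimating $V_K$ from $O(n^4)$ to $O(n^2)$ and is what makes the asymptotic intervals usable in practice; your argument establishes the limit law but would need this (or an equivalent device) to deliver the advertised fast confidence intervals.
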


\begin{proof}
Here, we layout the key ideas of the proof. First, using asymptotics of U-statistics in \cite{hoeffding1992class} and \cite{el2003spearman}, we establish the asymptotic normality of the Kendall's Tau estimates. Since the latent correlations are  deterministic function (inverse bridging function) of the Kendall's Tau correlations, we use the Delta method to obtain the asymptotic normality of the latent correlations. Next, we project the latent correlations onto a space of independent parameters \citep{archakov2018new}, so that we can apply the Delta method to obtain the asymptotic normality of the SGC-Reg regression coefficients. The regularity assumptions ensure the stability of the transformation $\hat{\beta}_n = {\hat{\Sigma}}_{n_{XX}}^{-1} {\hat{\Sigma}}_{n_{XY}}$ of the latent correlation matrix, so that we can apply the Delta method. The detailed proof and analytical expressions of $V_\beta$ and $V_\Sigma$ are provided in \ref{sec: proofs}.
\end{proof}

As part of the derivations, we solved a non-trivial computational problem by developing an efficient way of computing of the asymptotic covariance of Kendall's Tau matrix $V_\Sigma$. We extended the approach of \cite{perreault2022efficient} to calculate the covariance of Kendall's Tau in the case of mixed data and presence of ties. As it is explained in Supplementary Material, our approach requires only $O(n\log{n})$ FLOPs compared to the $O(n^4)$ FLOPs using a naive brute-force approach. This reduction in computational burden enables us a fast calculation of the asymptotic variance for moderate-to-large $n$.


\subsection{Advantages of SGC Framework}
The key advantages of the proposed regression and dimension reduction framework span across the entire spectrum of assumptions, modelling, estimation, interpretation of results, and prediction. Table \ref{tab: comparison} provides a comparison of SGC-based framework vs traditional alternatives. Briefly, the approach offers i) mutually consistent conditional modeling across four different types of outcome, ii) semi-parametric likelihood-free and computationally fast estimation, iii) a natural normalization of mixed types scales, iv) regression and dimension reduction results that are invariant to monotone transformation, v) intuitive interpretation of the regression and dimension reduction results.

\begin{singlespace}
\begin{longtable}{|p{.20\textwidth} | p{.40\textwidth} |p{.40\textwidth}|}
\toprule
\textbf{Aspect}                 & \textbf{Traditional models (Observed space) }                                                               & \textbf{SGC (Latent space)}                                                                        \\ 
\midrule
\endhead 
Conditional associations   & Linear, probit, truncated, and ordinal probit regressions. & Mutually consistent conditional SGC-Reg models.           \\
                           & \emph{Testing conditional independence:} model-specific. 
                           & \emph{Testing conditional dependence:} unified-approach. \\
                                                     & \emph{Goodness of fit}: model-specific                           & \emph{Goodness of fit:} latent R-square  \\
                           \midrule
Estimation                    & Model-specific, often likelihood-based, computationally intensive. Efficient.

& Method of moments, computationally fast. Less efficient.                                  \\                        

                           \midrule
Transformation and Scaling                    & Manual transformation/scaling of outcomes and predictors.  Non-uniform interpretation across  mixed types. 
& Invariance to monotone transformations. Automatic scaling.                                          \\
\midrule
Distributional assumptions & Parametric.                                                               & Semi-parametric.                                      \\
\midrule
Missing data imputation    & Imputation by mean or restricted to complete cases.                                                & Using latent correlation to impute missing data under missing-at-random assumption.                     \\
\midrule
Interpretation             & Original scale interpretation. The signs of regression coefficients define the direction of association. Regression coefficients are not cross-comparable.                                             & Quantile scale interpretation. The signs of regression coefficients define the direction of association. Regression coefficients are standardized and cross-comparable.                                                                      \\
\midrule
Prediction                 & Model-specific.                                                         & Best linear unbiased predictors via conditional expectation.  \\ 
\midrule 
 Handling multicollinearity & No general method to do principal component analysis of mixed data and perform subsequent regression  & Latent principal component analysis allows to perform joint regression using latent PCs and handle multicollinearity \\\bottomrule
 \caption{Comparison between traditional models and SGC-Reg.}
 
\label{tab: comparison}
\end{longtable}
\end{singlespace}

\subsection{SGC-PCA and SGC-PCR: Principal Component Analysis and 
Principal Component Regression}\label{sec:pcr}

To obtain a low-dimensional representation of the latent predictors, we 
define Semiparametric Gaussian Copula Principal Component Analysis 
(SGC-PCA). Let $\Sigma_{XX}$ denote the latent correlation matrix of the 
predictors. The latent principal component loadings $V_m = (v_1, \ldots, 
v_m)$ $(p \times m)$ are defined as the first $m$ eigenvectors of 
$\Sigma_{XX}$, corresponding to eigenvalues $\lambda_1 \geq \cdots \geq 
\lambda_m > 0$. The $i$-th latent principal component score is then given 
by:
$$
PC_{im} = f_X(Z_i^X) V_m, \quad i = 1, \ldots, n.
$$
The proportion of total latent variance explained by the first $m$ 
components is $\sum_{j=1}^m \lambda_j / \sum_{j=1}^p \lambda_j$. In 
practice, $\Sigma_{XX}$ is replaced by its estimate $\hat{\Sigma}_{nXX}$ 
obtained via algorithm \ref{alg: estimation}, 
yielding estimated loadings $\hat{V}_m$ and estimated eigenvalues 
$\hat{\lambda}_1 \geq \cdots \geq \hat{\lambda}_m$.

A key advantage of SGC-PCA over standard PCA applied to observed mixed 
data is that it operates on the latent correlation matrix $\Sigma_{XX}$, 
which places all variables on a common scale and is invariant to monotone 
transformations of the original variables. This yields principal 
components that are interpretable across heterogeneous measurement types 
without requiring ad hoc transformations or standardizations.

To address possible multicollinearity among predictors, we extend SGC-Reg 
to principal component regression by defining a regression model between 
the outcome and the first $m$ latent principal components of the 
predictors. Using the loadings $V_m$ from SGC-PCA, the regression model 
becomes:
$$
f_{Y}(Z^Y_i) = f_{X}(Z_{i}^X) V_m \beta^{PC}_m + \epsilon_{i}, \quad 
i=1,\dots, n,
$$
where $\beta^{PC}_m$ is an $m \times 1$ coefficient vector. To obtain the regression coefficient, observe that the joint distribution 
of $(f_{Y}(Z^Y_i), f_{X}(Z_{i}^X) V_m) \sim N(0, \tilde{\Sigma}_m)$, 
where:
$$
\tilde{\Sigma}_m = \begin{pmatrix} 1 & \Sigma_{YX}V_m\\ V_m^T \Sigma_{XY} 
& \mathrm{diag}(\lambda_1, \cdots, \lambda_m)\end{pmatrix}.
$$
Here, $\mathrm{diag}(\lambda_1, \cdots, \lambda_m)$ contains the first $m$ 
eigenvalues of $\Sigma_{XX}$, which appear as the diagonal covariance 
matrix of the projected predictors $f_X(Z_i^X)V_m$ due to the 
orthogonality of principal component loadings. It immediately follows 
that the SGC-PCR coefficient is:
$$
\beta^{PC}_m = \mathrm{diag}\!\left(\frac{1}{\lambda_1}, \cdots, 
\frac{1}{\lambda_m}\right) V_m^{T} \Sigma_{XY}.
$$
All elements of this expression are obtained directly from the latent 
correlation matrix estimators. In 
practice, $\Sigma_{XY}$, $V_m$, and $\lambda_1, \ldots, \lambda_m$ are 
replaced by their estimates $\hat{\Sigma}_{nXY}$, $\hat{V}_m$, and 
$\hat{\lambda}_1, \ldots, \hat{\lambda}_m$ respectively. We can derive the asymptotic distribution of SGC-PCR regression coefficients using the results in Theorem $\ref{thm: asymp}$. Under the assumption that, $W_m = diag(\frac{1}{\lambda_1}, \cdots, \frac{1}{\lambda_m}) V_m^{T}$ is constant, the asymptotic variance of $\sqrt{n}\beta^{PC}_m$ equals $W_m^T V W_m$, where, $V$ equals the asymptotic variance of $\Sigma_{XY}$. 

\section{Methodological Applications}
GLNPN framework provides a readily available way to perform predictions of latent variables as well as imputation of missing mixed data observations. We develop both approaches in this section.

\subsection{Latent variable predictions} \label{sec: latent-pred}

Although latent variables are not required for estimating regression parameters in SGC-Reg, certain applications of the framework require prediction of latent representations. To address this, we adopt an approach similar to best linear unbiased predictors (BLUPs) in mixed-effects models and use conditional expectations to obtain predictors of latent variables given observed data. At this stage we do not distinguish between outcomes and predictors and therefore omit the participant index $i$, since latent predictions are obtained using only subject-specific observed variables.

Let $L=(L_c,L_t,L_o,L_b)=f(Z_c,Z_t,Z_o,Z_b)$ denote the latent variables obtained through coordinate-wise monotone transformations $f$ as defined in the GLNPN model. We use the notation $L_{-c}=(L_t,L_o,L_b)$ and $L_{-ct}=(L_o,L_b)$, where $ct$ denotes the union of continuous and truncated indices. For a set of indices $a$, $\Sigma_{a,a}$ denotes the corresponding submatrix of $\Sigma$, while $\Sigma_{a,-a}$ and $\Sigma_{-a,a}$ denote cross-covariance blocks.

To compute $E(L|X)=E(L_c,L_t,L_o,L_b|X_c,X_t,X_o,X_b)$, two cases arise depending on whether the truncated variable equals zero.

When $X_t=0$, the continuous component satisfies $L_c=f_c(X_c)$, while the observed values $X_t,X_o,X_b$ restrict the remaining coordinates $L_{-c}$ to lie in a Cartesian product of intervals $B=\{\times_{\lambda \notin c} B_\lambda\}$ determined by the cutoffs. Under the model assumptions $\{X_t=x_t,X_o=x_o,X_b=x_b\}\Longleftrightarrow\{L_{-c}\in B\}$. Using the conditional distribution

\begin{equation}
L_{-c}\mid L_c \sim N\!\left(\Sigma_{-c,c}\Sigma_{c,c}^{-1}L_c,\;
\Sigma_{-c,-c}-\Sigma_{-c,c}\Sigma_{c,c}^{-1}\Sigma_{c,-c}\right),
\end{equation}

we obtain

\begin{equation}
E(L|X_c,X_t,X_o,X_b)=
\big(f_c(X_c),\,E(L_{-c|c}\mid L_{-c|c}\in B_{-c})\big).
\end{equation}

When $X_t>0$, both $L_c=f_c(X_c)$ and $L_t=f_t(X_t)$ are determined through the transformations, while the ordinal and binary components restrict $L_{-ct}=(L_o,L_b)$ to lie in a set $B_{-ct}=\{\times_{\lambda \notin ct}B_\lambda\}$. Using

\begin{equation}
L_{-ct}\mid L_{ct} \sim 
N\!\left(\Sigma_{-ct,ct}\Sigma_{ct,ct}^{-1}L_{ct},\;
\Sigma_{-ct,-ct}-\Sigma_{-ct,ct}\Sigma_{ct,ct}^{-1}\Sigma_{ct,-ct}\right),
\end{equation}

the latent predictor becomes

\begin{equation}
E(L|X_c,X_t,X_o,X_b)=
\big(f_c(X_c),\,f_t(X_t),\,E(L_{-ct|ct}\mid L_{-ct|ct}\in B_{-ct})\big).
\end{equation}

Computing these quantities requires estimates of the monotone transformations, the truncation sets, and the mean of a truncated multivariate normal distribution. Following \cite{liu2009nonparanormal}, the transformations are estimated via empirical CDFs,

\begin{equation}
\hat G_{cn}(x)=\frac{1}{n+1}\sum_{i=1}^n I(X_{ci}\le x), 
\qquad
\hat G_{tn}(x)=\frac{1}{n+1}\sum_{i=1}^n I(X_{ti}\le x), \; x>0,
\end{equation}

with $\hat f_c(x)=\Phi^{-1}(\hat G_{cn}(x))$ and $\hat f_t(x)=\Phi^{-1}(\hat G_{tn}(x))$. The interval sets $B_{-c}$ and $B_{-ct}$ are obtained using the moment-based cutoff estimators from Section~\ref{sec: estimation}. Expectations of truncated multivariate normal distributions are computed using the recursive moment generating function approach of \cite{tmvtnorm}, implemented in the \emph{tmvtnorm} R package. Importantly, latent predictions can be computed independently for each subject, allowing straightforward parallelization.

\subsection{Missing data imputation}
Next, we show how to perform imputation of missing data under GLNPN. 

Suppose we have missing observations for a particular subject. We split the full vector $X$ into observed and missing parts as $X=(X_O,X_M)$, where $O$ denotes observed and $M$ denotes missing parts and subject-specific index $i$ has been omitted for notational simplicity. First we predict $E[L_M|X_O]$ and then obtain the prediction of $X_M$ using an appropriate transformation-then-truncation step applied to $E[L_M|X_O]$. Remember that
\begin{equation}
\begin{aligned}    
L_{M|O}=L_{M}|L_O \sim N(\Sigma_{M,O}\Sigma^{-1}_{O,O}L_O, \Sigma_{M,M}-\Sigma_{M,O}\Sigma_{O,O}^{-1}\Sigma_{O,M})
\end{aligned}
\end{equation}
As $X_O$ is $\sigma(L_O)$-measurable random variable, where $\sigma(L_O)$ denotes the $\sigma$-algebra generated by $L_O$, we can use the tower property of conditional expectations to get the following identity 
\begin{equation}
\begin{aligned}
E[L_M|X_O]=E[E[L_M|L_O]|X_O]= E[\Sigma_{M,O}\Sigma^{-1}_{O,O}L_O|X_O]=\Sigma_{M,O}\Sigma^{-1}_{O,O}E[L_O|X_O]
\end{aligned}
\end{equation}

Finally, we can calculate $E[L_O|X_O]$ from the equation above using the same steps as in previous section.

\section{Simulation study}\label{sec: sim}

We conduct a series of simulation experiments to evaluate the performance of the proposed framework. The data generation mechanism is designed to mimic the dependence structure observed in the NHANES analysis. We begin with the estimated $62 \times 62$ latent correlation matrix obtained in Section~\ref{sec: nhanes} (Figure~\ref{fig:latcor-nhanes}). From this matrix we extract random submatrices of size $p=8,16,24,$ and $32$, and also consider the full matrix with $p=62$. These define five simulation settings corresponding to increasing dimensionality.

For each setting we generate $n=9000$ latent observations, $
(L_{i1},L_{i2},\ldots,L_{ip}) \sim N(0,\Sigma), \qquad i=1,\ldots,n .$ The observed variables $X$ are then constructed using the GLNPN model by partitioning the $p$ variables into approximately equal groups of binary, ordinal, truncated, and continuous types. Specifically, the first $\lfloor p/4 \rfloor$ variables are binary, the next $\lfloor p/4 \rfloor$ are ordinal (equally divided between three- and four-level variables), the next $\lfloor p/4 \rfloor$ are truncated, and the remaining variables are continuous. Random cutoff values are generated to define the discretization and truncation mechanisms and are kept fixed across simulation settings.

For each simulated dataset we estimate the latent correlation matrix using Algorithm~\ref{alg: estimation}. The first binary variable is treated as the outcome, and the latent regression coefficient is estimated accordingly. Asymptotic confidence intervals are constructed using Theorem~\ref{thm: asymp}, and coverage is assessed by checking whether the true coefficient lies within the interval. This procedure is repeated for $500$ independent replicates in each simulation setting to evaluate empirical coverage and estimation accuracy.

\begin{figure}[H]
\centering
\begin{subfigure}[b]{  0.95\textwidth}
\centering
\includegraphics[scale=0.3]{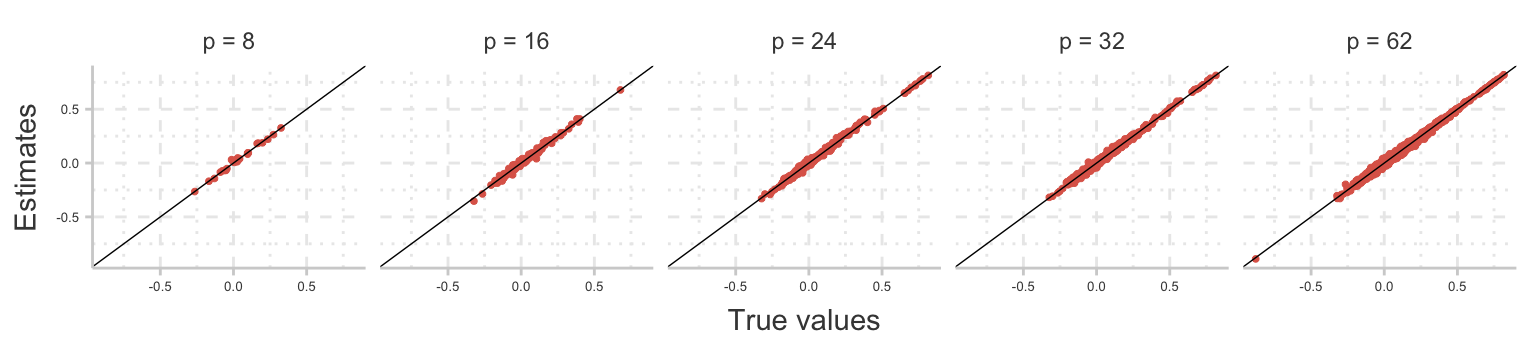}
\caption{Latent correlation}
\label{fig:sim-r-est}
\end{subfigure}
     \hfill
\centering
\begin{subfigure}[b]{  0.95\textwidth}
\centering
\includegraphics[scale=0.3]{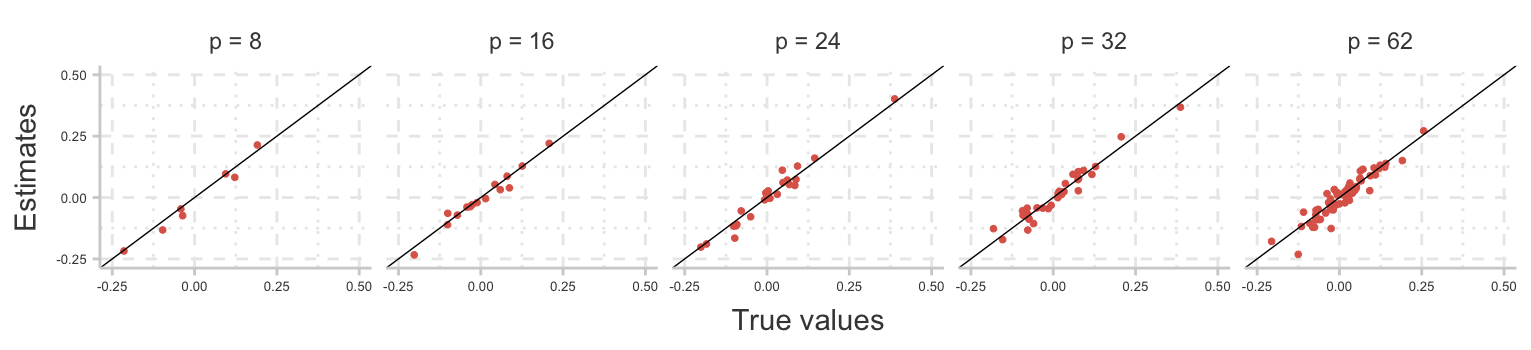}
\caption{The latent regression coefficients}
\label{fig:sim-beta-est}
\end{subfigure}
     \hfill
     \caption{Comparison of true values and estimates for one random replicate in each of the simulation settings. The black line denotes $y=x$ line.}
         \label{fig:sim-est}
\end{figure}

Figure \ref{fig:sim-est} compares the estimated latent correlations and latent regression coefficients with their true values. Across all simulation settings, the estimates align closely with the $y=x$ line, indicating good estimation accuracy. Figure \ref{fig:sim-cov} reports the coverage of the proposed asymptotic confidence intervals. The median coverage for both latent correlations and regression coefficients in the SGC-Reg model exceeds 0.95 across all settings. Even in the lowest-dimensional case (Setting A, $p=8$), the empirical coverage is already close to the nominal level.

As the dimension increases, the number of latent correlation parameters grows at rate $O(p^2)$, reaching 120 even for a moderate dimension of $p=16$. The slight over-coverage observed in higher-dimensional settings is likely due to using a fixed number of simulation replicates (500) for all values of $p$. When the number of parameters increases, a larger number of replicates would provide a more stable Monte Carlo estimate of coverage. Due to computational constraints we restrict attention to 500 replicates per setting, although a larger simulation study would allow a more precise assessment of coverage.

\section{Frailty-related deficits and 5-year mortality in older adults from NHANES, 1999--2010}
\label{sec: nhanes}

Frailty is a multidimensional geriatric construct reflecting loss of physiologic reserve and increased vulnerability to adverse outcomes such as disability, hospitalization, and mortality. In the deficit-accumulation framework, frailty is summarized by combining abnormalities across multiple domains, including chronic diseases, functional limitations, self-rated health, and laboratory biomarkers \citep{mitnitski2001accumulation, rockwood2007frailty, searle2008standard}. A common operationalization is the Frailty Index (FI), which aggregates such deficits into a single summary score and has been widely used as a marker of biological aging and as a predictor of mortality \citep{mitnitski2001accumulation, searle2008standard, blodgett2015frailty}. While highly useful for risk stratification, the FI largely treats deficits as exchangeable and does not explicitly model their dependence structure; in addition, ordinal and continuous variables are often coarsened or dichotomized in practice. These features motivate a richer multivariate analysis that preserves measurement scale and clarifies how different frailty domains jointly relate to mortality.

NHANES is particularly well suited for this application because it combines interview, examination, and laboratory data in a complex, multistage probability sample of the U.S. civilian noninstitutionalized population, and these records can be linked to mortality follow-up through the National Center for Health Statistics (NCHS) Linked Mortality Files. Table \ref{tab:variable-list} lists the 61 frailty-related variables used in our analysis: 29 continuous laboratory and examination measures, 17 ordinal self-reported or functional variables, and 15 binary disease indicators. Examples include glycohemoglobin, creatinine, and red blood cell indices among the continuous variables; self-rated health, healthcare use, and graded difficulty with physical tasks among the ordinal variables; and stroke, cancer, diabetes, and coronary heart disease among the binary variables. Taken together, these variables reflect frailty as multisystem deficit accumulation rather than dysfunction within a single organ system.

Our analytic sample pools NHANES waves from 1999--2010 and includes participants older than 60 years with mortality follow-up information. After excluding subjects with more than $20\%$ missingness, 9,478 participants remained for analysis. Because NHANES combines interview, examination, and laboratory measurements across multiple survey waves, this pooled dataset provides unusually rich information on frailty-related deficits spanning clinical conditions, functional limitations, and physiological biomarkers. The resulting multivariate dataset is therefore well suited for studying the joint dependence structure among frailty-related variables and their association with 5-year mortality.

\begin{figure}[H]
\centering
\includegraphics[scale=0.14]{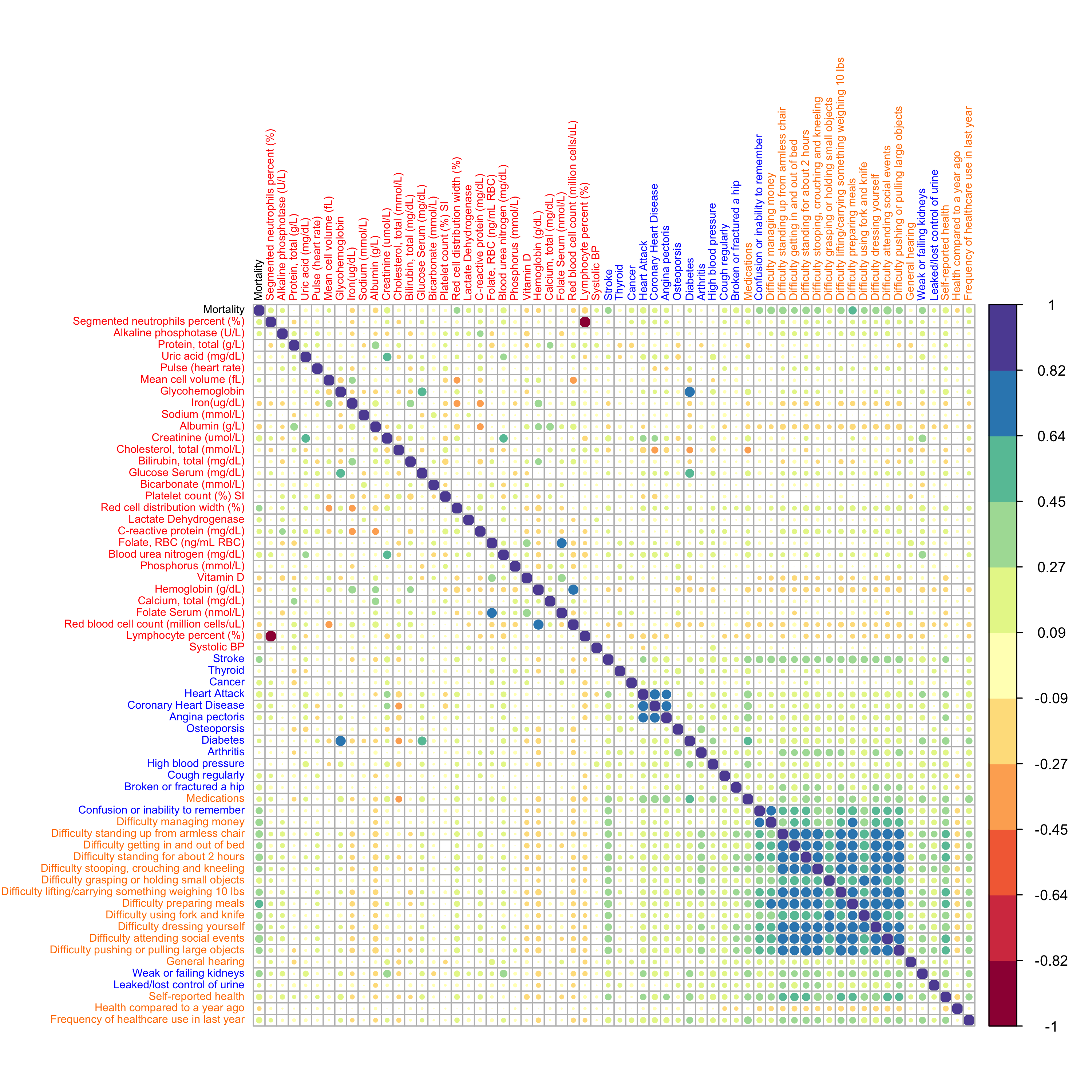}
\caption{The estimated latent correlation matrix of mortality and other frailty variables from NHANES 1999-2010 (subjects older than 60 years)}
\label{fig:latcor-nhanes}
\end{figure}

We first use the semiparametric Gaussian copula framework to estimate the latent dependence structure among mortality and the frailty-related variables. Figure \ref{fig:latcor-nhanes} displays the estimated latent correlation matrix and reveals several clinically interpretable patterns. The strongest block occurs among the self-reported difficulty variables, highlighting functional decline as a core dimension of frailty. A second cluster appears among cardiovascular conditions, especially heart attack, coronary heart disease, and angina pectoris. More moderate local structure is also visible among laboratory markers, indicating that frailty-related information is distributed across functional, clinical, and biomarker domains. A key advantage of the SGC approach is that it places all pairwise associations on a common latent scale despite the mixed measurement types.

We next examine the marginal association between each frailty variable and 5-year mortality using separate SGC-Reg models. Figure \ref{fig:beta-marg-ci} shows that the strongest marginal signals are dominated by functional and self-reported deficits, including difficulty preparing meals, difficulty attending social events, difficulty managing money, difficulty standing for about 2 hours, difficulty standing up from an armless chair, difficulty lifting or carrying something weighing 10 pounds, and difficulty dressing oneself. Kidney-related and cognitive variables, such as weak or failing kidneys and confusion or inability to remember, also rank among the strongest marginal predictors. From a frailty perspective, this pattern is biologically plausible: impairments in daily activities, cognition, and organ-specific function often reflect broad loss of reserve across multiple systems and therefore show strong unadjusted associations with survival.

\begin{figure}[H]
\centering
\begin{subfigure}[b]{  0.95\textwidth}
\centering
\includegraphics[scale=0.25]{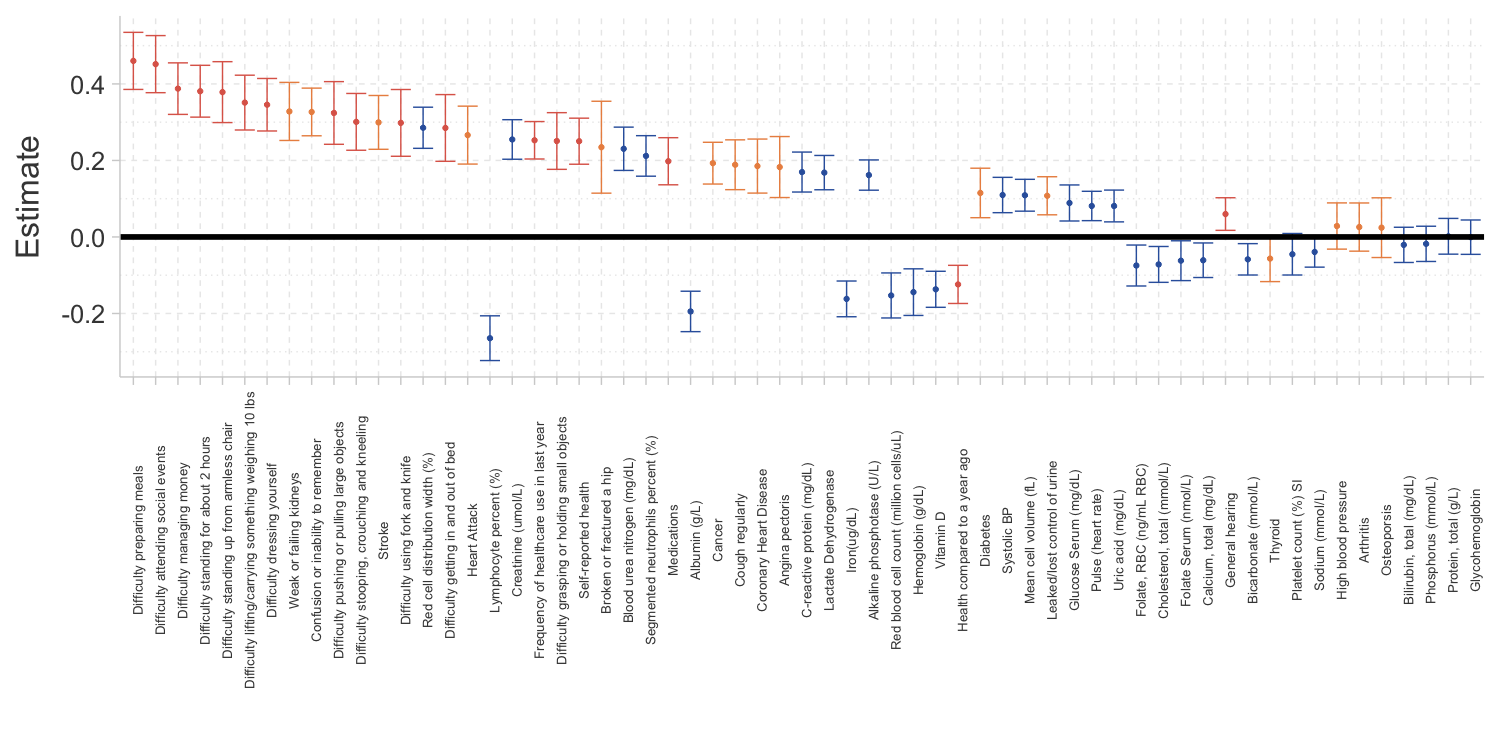}
\caption{The marginal latent regression coefficients of mortality with frailty variables (ordered with respect to absolute values)}
\label{fig:beta-marg-ci}
\end{subfigure}
     \hfill
\centering
\begin{subfigure}[b]{  0.95\textwidth}
\centering
\includegraphics[scale=0.25]{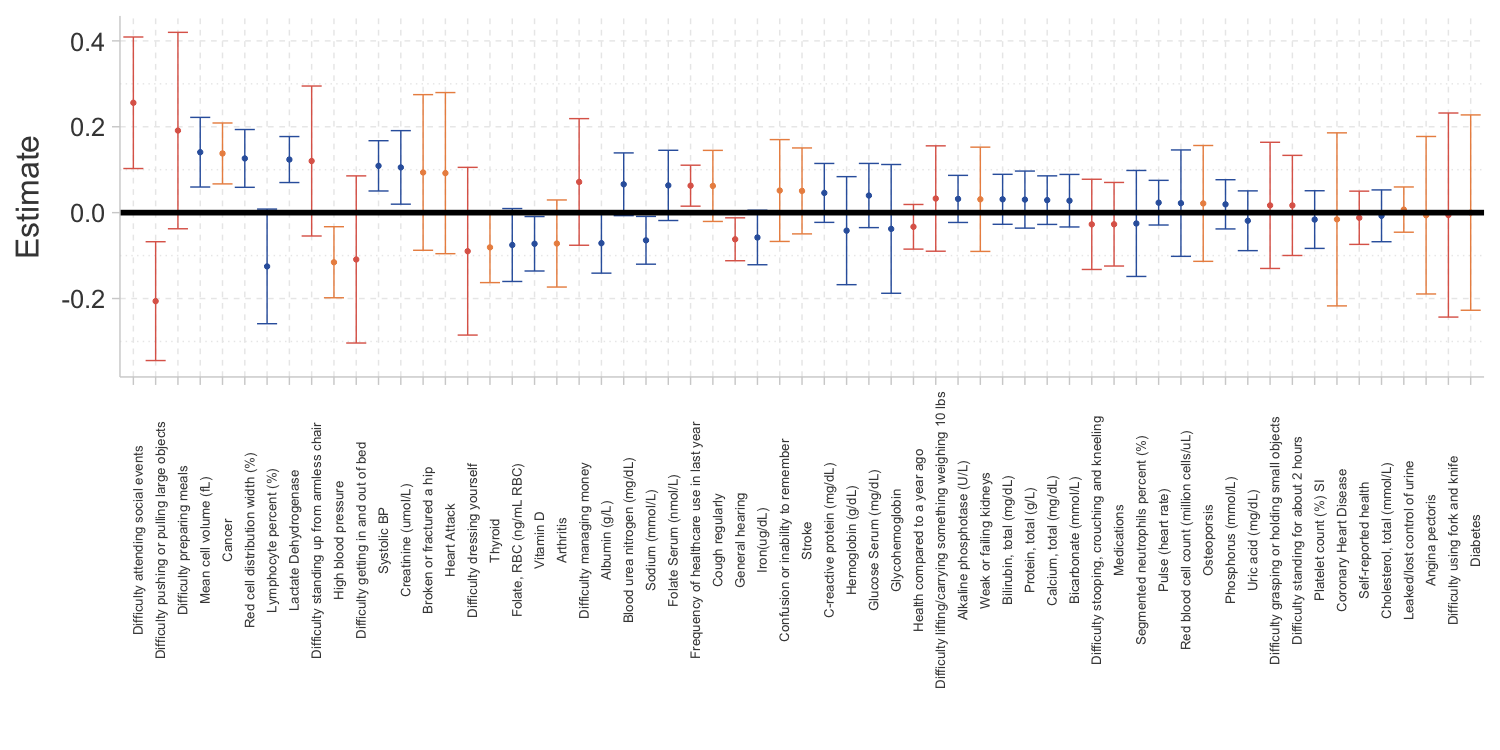}
\caption{The joint latent regression coefficients of mortality with frailty variables (ordered with respect to absolute values)}
\label{fig:beta-joint-ci}
\end{subfigure}
     \hfill
     \caption{Comparison of marginal and joint latent regression coefficients between mortality and frailty variables}
         \label{fig:beta}
\end{figure}

To distinguish shared from variable-specific effects, we then fit the full joint latent regression model, SGC-Reg, with mortality as the outcome and all frailty variables entered simultaneously. Figure \ref{fig:beta-joint-ci} shows substantial attenuation of many marginal effects after joint adjustment, with several coefficients shrinking toward zero and a few changing sign. This is expected in a frailty setting, where many observed deficits are overlapping manifestations of the same underlying vulnerability. In particular, the sign change for difficulty pushing or pulling large objects is consistent with multicollinearity induced by the strong dependence among the functional difficulty variables. The joint SGC-Reg analysis is therefore useful not only for ranking predictors, but also for identifying which variables retain distinct information after accounting for the broader latent frailty structure.

Our framework addresses multicollinearity through semiparametric Gaussian copula principal component analysis (SGC-PCA) and semiparametric Gaussian copula principal component regression (SGC-PCR) (Section \ref{sec:pcr}). By summarizing the latent correlation structure into a small number of orthogonal components, SGC-PCA provides a clinically interpretable representation of the major frailty domains, while SGC-PCR relates those domains to mortality in a dimension-reduced regression model. Figure \ref{fig:pcloading-nhanes} shows that the leading latent principal components are readily interpretable: PC$1$ explains $19\%$ of the total variation and is dominated by self-reported difficulty variables, indicating a broad functional limitation axis; PC$2$ ($6\%$) is primarily associated with heart disease variables; PC$3$ ($5\%$) is driven by diabetes-related deficits; PC$4$ ($5\%$) is dominated by hemoglobin-related laboratory measures; PC$5$ and PC$6$ together capture additional hematologic variation, especially white blood cell related measures; and PC$7$ ($3\%$) reflects kidney-related function. This structure is consistent with the view of frailty as a multisystem process involving disability, disease burden, and physiologic dysregulation rather than any single deficit. Performing SGC-PCR on these latent components allows us to assess which broad biological or functional domains are most strongly associated with 5-year mortality. Among the first five components, PC$1$ (functional difficulty), PC$4$ (red blood cell related measures), and PC$5$ (white blood cell related measures) are significantly associated with mortality (Figure \ref{fig:latpcreg-nhanes}), suggesting that both overt functional impairment and subtler hematologic dysregulation contribute meaningfully to mortality risk in older adults.

\section{Discussion}
The main contribution of this paper is a joint modeling framework for mixed data types based on the semiparametric Gaussian copula (SGC). The proposed regression method, SGC-Reg, performs linear regression in the latent SGC space and yields mutually consistent conditional regression models for continuous, truncated, ordinal, and binary outcomes. Unlike likelihood-based copula models, the proposed approach is likelihood-free, scale-invariant, and computationally efficient. By transforming variables to a common latent scale, SGC-Reg automatically standardizes the covariates and produces regression coefficients that are directly comparable across different data types. In addition, the framework can provide natural $R^2$ analogs for mixed-type outcomes and enable methodological extensions, such as latent-variable prediction and missing-data imputation.

The NHANES application illustrates how the proposed framework can be used to study the dependence structure among frailty-related deficits and their association with mortality in a large epidemiologic dataset. Although NHANES is designed to provide nationally representative estimates through a complex sampling scheme, our analysis focuses on the pooled analytic sample and does not incorporate survey weights. Consequently, the reported associations should be interpreted as relationships within the study sample rather than as design-based population estimates. Extending the SGC framework to incorporate survey weights would allow population-level inference and represents an interesting direction for future methodological development.

From a computational perspective, the proposed approach scales well to moderately high dimensions. Estimation requires calculating $O(p^2)$ pairwise latent correlations, and the computation of Kendall's Tau for each pair requires only $O(n\log n)$ floating point operations. Thus the overall complexity grows quadratically in the number of variables. In addition, we proposed a computationally efficient procedure for estimating the asymptotic variance-covariance matrix of the SGC-Reg regression estimates, reducing the computational cost from $O(n^4)$ under a naive implementation to $O(n\log n)$.

Several limitations of the current framework suggest directions for future work. First, SGC-Reg assumes a uniform direction and magnitude of the latent effect for each variable and therefore may be less flexible than models that allow level-specific effects for ordinal outcomes. More flexible GLNPN-type mechanisms involving multiple latent variables could be developed to capture non-uniform effects across outcome levels. Second, although the method provides interpretable regression coefficients on the latent scale, interpreting results in terms of the original measurement scale is less direct. Developing quantile-scale interpretations of SGC-Reg regression coefficients would therefore be valuable.

Future research directions are numerous. The SGC framework could be extended to accommodate survival outcomes, which would be particularly relevant for epidemiologic applications such as mortality analyses. Extensions to functional data, longitudinal data, or multilevel mixed-type data structures would also be of substantial interest. In addition, the latent representations produced by the SGC model provide a natural starting point for extending clustering, network analysis, and other distance-based methods to heterogeneous mixed-type datasets.
\section{Software}

A \verb|R| package implementation of the latent correlation estimation has been made available as \verb|fromXtoRMixed| function as part of the \verb|SGCTools| package (\url{https://github.com/Ddey07/SGCTools}).  

\newpage

\appendix

\section{Proofs}
\label{sec: proofs}
\textbf{Proof of Theorem \ref{thm:bridge}}:
Before we start the proof, let's get familiar with some notations. Let us denote $\Tilde{\Phi}_2((a,b),(c,d),\rho)$ as the probability of the rectangle $\{(u,v): a < u <c, b < v < d\}$ for a standard bivariate normal with correlation $\rho$. And, for $d>2$, $\Tilde{\Phi}_d((a_1,a_2, \cdots, a_d),(b_1, b_2, \cdots, b_d),S(\rho)$ denotes the probability of the hyper-rectangle $\{(u_1, u_2, \cdots, u_d): a_i < u_i < b_i, i=1, 2, \cdots, d\}$. 

Let $X_j, X_k$ be ordinal with levels $\{0,1,\cdots,l_{j}-1\}$ and $\{0,1,\cdots,l_{k}-1\}$ respectively and $(L_j, L_k)$ is the corresponding latent standard bivariate normal with correlation $\rho$. Then for two independent observations $i,i'$ - 

\begin{align*}
& P(X_{ij} > X_{i'j}, X_{ik} > X_{i'k}) \\
& = \sum_{r=1}^{l_{j}-1}\sum_{s=1}^{l_{s}-1} [P(X_{ij}=r, X_{ik}=s) P(X_{i'j}<r, X_{i'k}<s)]\\
& = \sum_{r=1}^{l_{j}-1}\sum_{s=1}^{l_{s}-1} [P(\Delta_{jr} \leq L_{ij} <\Delta_{j(r+1)}, \Delta_{ks} \leq L_{ik} <\Delta_{(k+1)s}) P(L_{i'j}<\Delta_{jr}, L_{i'k}<\Delta_{ks})]\\
& = \sum_{r=1}^{l_j-1}\sum_{s=1}^{l_k-1}[\Tilde{\Phi}_2((\Delta_{jr},\Delta_{ks}),(\Delta_{j(r+1)},\Delta_{k(s+1)};\rho) \Tilde{\Phi}_2((-\infty,-\infty),(\Delta_{jr},\Delta_{ks});\rho)]
\end{align*}
Similarly, 
\begin{align*}
& P(X_{ij} > X_{i'j}, X_{ik} < X_{i'k})\\
&= \sum_{r=1}^{l_{j}-1}\sum_{s=1}^{l_{s}-1} [P(X_{ij}=r, X_{ik}=s-1) P(X_{i'j}<r, X_{i'k} > (s-1))]\\
& = \sum_{r=1}^{l_{j}-1}\sum_{s=1}^{l_{s}-1} [P(\Delta_{jr} \leq L_{ij} <\Delta_{j(r+1)}, \Delta_{(k-1)s} \leq L_{ik} <\Delta_{ks}) P(L_{i'j}<\Delta_{jr}, L_{i'k}>\Delta_{ks})]\\
& = \sum_{r=1}^{l_{j}-1}\sum_{s=1}^{l_{s}-1} [P(\Delta_{jr} \leq L_{ij} <\Delta_{j(r+1)}, \Delta_{(k-1)s} \leq L_{ik} <\Delta_{ks}) P(L_{i'j}<\Delta_{jr}, -L_{i'k} < - \Delta_{ks})]\\
& = \sum_{r=1}^{l_j-1}\sum_{s=1}^{l_k-1}[\Tilde{\Phi}_2((\Delta_{jr},\Delta_{k(s-1)}),(\Delta_{j(r+1)},\Delta_{ks};\rho) \Tilde{\Phi}_2((-\infty,-\infty),(\Delta_{jr},-\Delta_{ks});-\rho)]
\end{align*}

By symmetry, the population Kendall's Tau, $\tau_{jk}$ for $X_j, X_k$ can be written as follows - 

\begin{align*}
\tau_{jk} = &  2(P(X_{ij} > X_{i'j}, X_{ik} > X_{i'k})- P(X_{ij} > X_{i'j}, X_{ik} < X_{i'k}))\\
= & 2(\sum_{r=1}^{l_j-1}\sum_{s=1}^{l_k-1}[\Tilde{\Phi}_2((\Delta_{jr},\Delta_{ks}),(\Delta_{j(r+1)},\Delta_{k(s+1)};\rho)\Tilde{\Phi}_2((-\infty,-\infty),(\Delta_{jr},\Delta_{ks});\rho)\\
&-\Tilde{\Phi}_2((\Delta_{jr},\Delta_{k(s-1)}),(\Delta_{j(r+1)},\Delta_{ks};\rho) \Tilde{\Phi}_2((-\infty,-\infty),(\Delta_{jr},-\Delta_{ks});-\rho)])\\
= & 2 \sum_{r=1}^{l_j-1}\sum_{s=1}^{l_k-1}[\{\Phi_2(\Delta_{jr},\Delta_{ks}; \rho) + \Phi_2(\Delta_{j(r+1)},\Delta_{k(s+1)};\rho) - \\ 
&\Phi_2(\Delta_{jr},\Delta_{k(s+1)}; \rho) - \Phi_2(\Delta_{j(r+1)},\Delta_{ks}; \rho)\}\Phi_2(\Delta_{jr},\Delta_{ks};\rho) - \\
& \{\Phi_2(\Delta_{jr},\Delta_{k(s-1)}; \rho) + \Phi_2(\Delta_{j(r+1)},\Delta_{ks};\rho) - \\ 
& \Phi_2(\Delta_{jr},\Delta_{ks}; \rho) - \Phi_2(\Delta_{j(r+1)},\Delta_{k(s-1)}; \rho) \} (\Phi(\Delta_{jr})- \Phi_2(\Delta_{jr},\Delta_{ks}; \rho))]\\
= & 2 \sum_{r=1}^{l_j-1}\sum_{s=1}^{l_k-1}[\Phi_2(\Delta_{jr},\Delta_{ks}; \rho)\{\Phi_2(\Delta_{jr},\Delta_{ks}; \rho) + \Phi_2(\Delta_{j(r+1)},\Delta_{k(s+1)};\rho) - \\
& \Phi_2(\Delta_{jr},\Delta_{k(s+1)}; \rho) - \Phi_2(\Delta_{j(r+1)},\Delta_{ks}; \rho)\\
& + \Phi_2(\Delta_{jr},\Delta_{k(s-1)}; \rho) + \Phi_2(\Delta_{j(r+1)},\Delta_{ks};\rho) -  \Phi_2(\Delta_{jr},\Delta_{ks}; \rho) - \Phi_2(\Delta_{j(r+1)},\Delta_{k(s-1)}; \rho)\}] -  \\
& 2 \sum_{r=1}^{l_j-1}\Phi({\Delta_{jr}})\sum_{s=1}^{l_k-1}[\Phi_2(\Delta_{jr},\Delta_{k(s-1)}; \rho) + \Phi_2(\Delta_{j(r+1)},\Delta_{ks};\rho) - \\ 
& \Phi_2(\Delta_{jr},\Delta_{ks}; \rho) - \Phi_2(\Delta_{j(r+1)},\Delta_{k(s-1)}; \rho)] \\
= & 2 \sum_{r=1}^{l_j-1}\sum_{s=1}^{l_k-1}[\Phi_2(\Delta_{jr},\Delta_{ks}; \rho)\{ \Phi_2(\Delta_{j(r+1)},\Delta_{k(s+1)};\rho) -  \Phi_2(\Delta_{jr},\Delta_{k(s+1)}; \rho) \\
& + \Phi_2(\Delta_{jr},\Delta_{k(s-1)}; \rho) - \Phi_2(\Delta_{j(r+1)},\Delta_{k(s-1)}; \rho)\}] - \\
& 2 \sum_{r=1}^{l_j-1}\Phi({\Delta_{jr}})[ - \Phi_2(\Delta_{jr},\Delta_{k(l_k-1)}; \rho) + \Phi_2(\Delta_{j(r+1)},\Delta_{k(l_k-1)}; \rho)]\\
= & 2 \sum_{r=1}^{l_j-1}\sum_{s=1}^{l_k-1}[\Phi_2(\Delta_{jr},\Delta_{ks}; \rho)\{ \Phi_2(\Delta_{j(r+1)},\Delta_{k(s+1)};\rho) - \Phi_2(\Delta_{j(r+1)},\Delta_{k(s-1)}; \rho)\}] + 
\\
& 2 \sum_{r=1}^{l_j-1}[\sum_{s=1}^{l_k-1}\Phi_2(\Delta_{jr},\Delta_{ks}; \rho) \Phi_2(\Delta_{jr},\Delta_{k(s-1)}; \rho) - \Phi_2(\Delta_{jr},\Delta_{ks}; \rho) \Phi_2(\Delta_{jr},\Delta_{k(s+1)}; \rho)]- \\
& 2 \sum_{r=1}^{l_j-1}\Phi({\Delta_{jr}})[ - \Phi_2(\Delta_{jr},\Delta_{k(l_k-1)}; \rho) + \Phi_2(\Delta_{j(r+1)},\Delta_{k(l_k-1)}; \rho)]\\
= & 2 \sum_{r=1}^{l_j-1}\sum_{s=1}^{l_k-1}[\Phi_2(\Delta_{jr},\Delta_{ks}; \rho)\{ \Phi_2(\Delta_{j(r+1)},\Delta_{k(s+1)};\rho) - \Phi_2(\Delta_{j(r+1)},\Delta_{k(s-1)}; \rho)\}] + \\
& 2 \sum_{r=1}^{l_j-1}[\Phi_2(\Delta_{jr},\Delta_{k1}; \rho) \Phi_2(\Delta_{jr},\Delta_{k0}; \rho) - \Phi_2(\Delta_{jr},\Delta_{k(l_k-1)}; \rho) \Phi_2(\Delta_{jr},\Delta_{k(l_k)}; \rho)]- \\
& 2 \sum_{r=1}^{l_j-1}\Phi({\Delta_{jr}})[ - \Phi_2(\Delta_{jr},\Delta_{k(l_k-1)}; \rho) + \Phi_2(\Delta_{j(r+1)},\Delta_{k(l_k-1)}; \rho)] \\
= & 2 \sum_{r=1}^{l_j-1}\sum_{s=1}^{l_k-1}[\Phi_2(\Delta_{jr},\Delta_{ks}; \rho)\{ \Phi_2(\Delta_{j(r+1)},\Delta_{k(s+1)};\rho) - \Phi_2(\Delta_{j(r+1)},\Delta_{k(s-1)}; \rho)\}] + \\
& 2 \sum_{r=1}^{l_j-1}[- \Phi_2(\Delta_{jr},\Delta_{k(l_k-1)}; \rho) \Phi(\Delta_{jr})]- \\
& 2 \sum_{r=1}^{l_j-1}\Phi({\Delta_{jr}})[ - \Phi_2(\Delta_{jr},\Delta_{k(l_k-1)}; \rho) + \Phi_2(\Delta_{j(r+1)},\Delta_{k(l_k-1)}; \rho)] \\
= & 2 \sum_{r=1}^{l_j-1}\sum_{s=1}^{l_k-1}[\Phi_2(\Delta_{jr},\Delta_{ks}; \rho)\{ \Phi_2(\Delta_{j(r+1)},\Delta_{k(s+1)};\rho) - \Phi_2(\Delta_{j(r+1)},\Delta_{k(s-1)}; \rho)\}] - \\
& 2 \sum_{r=1}^{l_j-1}\Phi({\Delta_{jr}})\Phi_2(\Delta_{j(r+1)},\Delta_{k(l_k-1)}; \rho) \\
\end{align*}

The third equality follows from the fact that - $\Tilde{\Phi}_2((a,b),(c,d); \rho) ={\Phi}_2((a,b); \rho) + {\Phi}_2((c,d); \rho)  - {\Phi}_2((a,d); \rho)  - {\Phi}_2((c,b); \rho)$. The seventh equality comes from the following result - for a sequence $\{a_s\}_{s=0}^{n}, \sum_{s=1}^{n-1} (a_s a_{s-1} - a_{s}a_{s+1}) = (a_1 a_0 - a_{n-1} a_{n})$. 

Now, reducing the above calculations for $l_{k}=2$, will yield the bridging function between a general ordinal and binary pairs. 

Now, suppose we have a truncated variable $X_m$ with cutoff $\Delta_m$ and corresponding latent normal variable $L_m$. For consistency in bridging function expressions, we slightly abuse notations to denote $\rho$ as the latent correlation between the truncated and the ordinal variable. Then similar to the ordinal-ordinal case, the calculations will look like - 

\begin{align}
& P(X_{ij} > X_{i'j}, X_{im} > X_{i'm}) \nonumber  \\
& = \sum_{r=1}^{l_{j}-1}[P(X_{ij}=r, X_{im}>0)P(X_{i'j} < r, X_{i'm}=0) \nonumber\\
& + P(X_{ij}=r, X_{i'j} < r, X_{i'm} >0,X_{im} - X_{i'm}>0)] \nonumber\\
& = \sum_{r=1}^{l_{j}-1}[P(\Delta_{jr} \leq L_{ij} <\Delta_{j(r+1)}, L_{im}>\Delta_{m}) P(L_{i'j}<\Delta_{jr}, L_{i'm}\leq \Delta_{m}) \nonumber\\
& + P(\Delta_{jr} \leq L_{ij} <\Delta_{j(r+1)},L_{i'j}<\Delta_{jr}, L_{i'm} > \Delta_{m}, , \frac{L_{im}- L_{i'm}}{2} > 0) ] \nonumber\\
&=\sum_{r=1}^{l_{j}-1}[P(\Delta_{jr} \leq L_{ij} <\Delta_{j(r+1)}, -L_{im} < -\Delta_{m}) P(L_{i'j}<\Delta_{jr}, L_{i'm}\leq \Delta_{m}) \nonumber \\
& + P( - \Delta_{j(r+1)} \leq -L_{ij} <- \Delta_{jr},L_{i'j}<\Delta_{jr}, L_{i'm} > \Delta_{m}, , \frac{L_{i'm}- L_{im}}{\sqrt{2}} < 0) ] \nonumber \\
& = \sum_{r=1}^{l_{j}-1}[\Tilde{\Phi}_2((\Delta_{jr},-\infty),(\Delta_{j(r+1)},-\Delta_m);-\rho) \Tilde{\Phi}_2((-\infty,-\infty),(\Delta_{jr},\Delta_{m});\rho) + \nonumber \\
 & \Tilde{\Phi}_4((-\Delta_{j(r+1)}, -\infty, \Delta_m, -\infty), (-\Delta_{jr}, \Delta_{jr}, \infty, 0); S_{5a}(\rho))] \nonumber \\
 & = \sum_{r=1}^{l_{j}-1}[\Tilde{\Phi}_2((\Delta_{jr},-\infty),(\Delta_{j(r+1)},-\Delta_m);-\rho) \Tilde{\Phi}_2((-\infty,-\infty),(\Delta_{jr},\Delta_{m});\rho) \nonumber \\
 & + \Tilde{\Phi}_3((-\Delta_{j(r+1)}, -\infty, -\infty), (-\Delta_{jr}, \Delta_{jr}, 0); \begin{pmatrix} 1 & 0 & \frac{\rho}{\sqrt{2}} \\ 0 & 1 & \frac{\rho}{\sqrt{2}} \\ \frac{\rho}{\sqrt{2}}  & \frac{\rho}{\sqrt{2}} & 1 \end{pmatrix}) - \nonumber \\
 & \Tilde{\Phi}_4((-\Delta_{j(r+1)}, -\infty, -\infty, -\infty), (-\Delta_{jr}, \Delta_{jr}, \Delta_m, 0); S_{5a}(\rho))]\nonumber\\
 & = \sum_{r=1}^{l_{j}-1}[T_{1r}^{(1)} + T_{2r}^{(1)} - T_{3r}^{(1)}] \label{eq:trunc-deriv-00} \\
 & P(X_{ij} < X_{i'j}, X_{im} > X_{i'm}) \nonumber \\
 & = \sum_{r=1}^{l_{j}-1}[P(X_{ij}=(r-1), X_{im}>0)P(X_{i'j} > (r-1), X_{i'm}=0)\nonumber \\
 & + P(X_{ij}=(r-1), X_{i'j} > (r-1), X_{i'm} >0,X_{im} - X_{i'm}>0)] \nonumber \\
& = \sum_{r=1}^{l_{j}-1}[P(\Delta_{j(r-1)} \leq L_{ij} <\Delta_{jr}, L_{im}>\Delta_{m}) P(L_{i'j} > \Delta_{jr}, L_{i'm}\leq \Delta_{m}) \nonumber \\
& + P(\Delta_{j(r-1)} \leq L_{ij} <\Delta_{jr},L_{i'j} > \Delta_{jr}, L_{i'm} > \Delta_{m}, , \frac{L_{im}- L_{i'm}}{\sqrt{2}} > 0) ] \nonumber \\
&=\sum_{r=1}^{l_{j}-1}[P(\Delta_{j(r-1)} \leq L_{ij} <\Delta_{jr}, -L_{im} < - \Delta_{m}) P(-L_{i'j} < -\Delta_{jr}, L_{i'm}\leq \Delta_{m}) \nonumber \\
& + P(-\Delta_{(jr} \leq -L_{ij} < -\Delta_{j(r-1)}, L_{i'j} > \Delta_{jr}, L_{i'm} > \Delta_{m}, , \frac{L_{i'm}- L_{im}}{\sqrt{2}} < 0) ] \nonumber \\
& =  \sum_{r=1}^{l_{j}-1}[\Tilde{\Phi}_2((\Delta_{j(r-1)},-\infty),(\Delta_{jr},-\Delta_m);-\rho) \Tilde{\Phi}_2((-\infty,-\infty),(-\Delta_{jr},\Delta_{m});-\rho) + \nonumber \\
& \Tilde{\Phi}_4((-\Delta_{jr}, \Delta_{jr}, \Delta_m, -\infty), (-\Delta_{j(r-1)}, \infty, \infty, 0); S_{5a}(\rho))] \nonumber \\
& =  \sum_{r=1}^{l_{j}-1}[\Tilde{\Phi}_2((\Delta_{j(r-1)},-\infty),(\Delta_{jr},-\Delta_m);-\rho) \Tilde{\Phi}_2((-\infty,-\infty),(-\Delta_{jr},\Delta_{m});-\rho) +  \nonumber \\
& \Tilde{\Phi}_3((-\Delta_{jr}, \Delta_m, -\infty), (-\Delta_{j(r-1)}, \infty, 0); S_{6b}(\rho)) - \Tilde{\Phi}_4((-\Delta_{jr}, -\infty, \Delta_m, -\infty), (-\Delta_{j(r-1)}, \Delta_{jr}, \infty, 0); S_{5a}(\rho))] \nonumber \\
& =  \sum_{r=1}^{l_{j}-1}[\Tilde{\Phi}_2((\Delta_{j(r-1)},-\infty),(\Delta_{jr},-\Delta_m);-\rho) \Tilde{\Phi}_2((-\infty,-\infty),(-\Delta_{jr},\Delta_{m});-\rho) + \nonumber \\
& \Tilde{\Phi}_2((-\Delta_{jr}, -\infty), (-\Delta_{j(r-1)}, 0); \frac{\rho}{\sqrt{2}}) - \Tilde{\Phi}_3((-\Delta_{jr}, -\infty, -\infty), (-\Delta_{j(r-1)}, \Delta_m, 0); S_{6b}(\rho)) - \nonumber  \\
& \Tilde{\Phi}_3((-\Delta_{jr}, -\infty, -\infty), (-\Delta_{j(r-1)}, \Delta_{jr}, 0); \begin{pmatrix} 1 & 0 & \frac{\rho}{\sqrt{2}} \\ 0 & 1 & \frac{\rho}{\sqrt{2}} \\ \frac{\rho}{\sqrt{2}}  & \frac{\rho}{\sqrt{2}} & 1 \end{pmatrix}) + \nonumber \\
& \Tilde{\Phi}_4((-\Delta_{jr}, -\infty, -\infty, -\infty), (-\Delta_{j(r-1)}, \Delta_{jr}, \Delta_m, 0); S_{5a}(\rho))] \nonumber \\
& = \sum_{r=1}^{l_{j}-1}[T_{1r}^{(2)} + T_{2r}^{(2)} - T_{3r}^{(2)} - T_{4r}^{(2)} + T_{5r}^{(2)}] \label{eq:trunc-deriv-01}
\end{align}

where, $S_{5a}(\rho)= \textrm{cov}(-L_{ij}, L_{i'j}, L_{i'm}, \frac{L_{i'm}- L_{im}}{\sqrt{2}}) =  \begin{pmatrix}
1 & 0 & 0 & \frac{\rho}{\sqrt{2}}\\
0 & 1 & \rho & \frac{\rho}{\sqrt{2}} \\
0 & \rho & 1 & \frac{1}{\sqrt{2}} \\
\frac{\rho}{\sqrt{2}} & \frac{\rho}{\sqrt{2}} & \frac{1}{\sqrt{2}} & 1
\end{pmatrix}$ and  \\
$S_{5b}(\rho)=  cov(L_{ij}, -L_{i'j}, - L_{i'm}, \frac{L_{i'm}- L_{im}}{\sqrt{2}}) = \begin{pmatrix}
1 & 0 & 0 & -\frac{\rho}{\sqrt{2}}\\
0 & 1 & \rho & -\frac{\rho}{\sqrt{2}} \\
0 & \rho & 1 & -\frac{1}{\sqrt{2}} \\
-\frac{\rho}{\sqrt{2}} & -\frac{\rho}{\sqrt{2}} & -\frac{1}{\sqrt{2}} & 1
\end{pmatrix}$. \\

\begin{align}
& \sum_{r=1}^{l_{j}-1}[T_{2r}^{(1)} + T_{4r}^{(2)}] \nonumber\\
& = \sum_{r=1}^{l_{j}-1}[\Phi_3((-\Delta_{jr}, \Delta_{jr}, 0); \begin{pmatrix} 1 & 0 & \frac{\rho}{\sqrt{2}} \\
0 & 1 & \frac{\rho}{\sqrt{2}} \\
\frac{\rho}{\sqrt{2}}  & \frac{\rho}{\sqrt{2}} & 1
\end{pmatrix}) - \Phi_3((-\Delta_{j(r+1)}, \Delta_{jr}, 0); \begin{pmatrix} 1 & 0 & \frac{\rho}{\sqrt{2}} \\
0 & 1 & \frac{\rho}{\sqrt{2}} \\
\frac{\rho}{\sqrt{2}}  & \frac{\rho}{\sqrt{2}} & 1
\end{pmatrix}) + \nonumber\\
& \Phi_3((-\Delta_{j(r-1)}, \Delta_{jr}, 0); \begin{pmatrix} 1 & 0 & \frac{\rho}{\sqrt{2}} \\ 0 & 1 & \frac{\rho}{\sqrt{2}} \\ \frac{\rho}{\sqrt{2}}  & \frac{\rho}{\sqrt{2}} & 1 \end{pmatrix}) - \Phi_3((-\Delta_{jr}, \Delta_{jr}, 0); \begin{pmatrix} 1 & 0 & \frac{\rho}{\sqrt{2}} \\
0 & 1 & \frac{\rho}{\sqrt{2}} \\
\frac{\rho}{\sqrt{2}}  & \frac{\rho}{\sqrt{2}} & 1
\end{pmatrix}) \nonumber\\
& = \sum_{r=1}^{l_{j}-1}[\Phi_3((-\Delta_{j(r-1)}, \Delta_{jr}, 0); \begin{pmatrix} 1 & 0 & \frac{\rho}{\sqrt{2}} \\ 0 & 1 & \frac{\rho}{\sqrt{2}} \\ \frac{\rho}{\sqrt{2}}  & \frac{\rho}{\sqrt{2}} & 1 \end{pmatrix}) - \Phi_3((-\Delta_{j(r+1)}, \Delta_{jr}, 0); \begin{pmatrix} 1 & 0 & \frac{\rho}{\sqrt{2}} \\
0 & 1 & \frac{\rho}{\sqrt{2}} \\
\frac{\rho}{\sqrt{2}}  & \frac{\rho}{\sqrt{2}} & 1
\end{pmatrix})] \nonumber\\
& = \sum_{r=1}^{l_{j}-1}[\Phi_2(-\Delta_{j(r-1)}, \Delta_{jr};0) - \Phi_3((-\Delta_{j(r-1)}, \Delta_{jr}, 0); \begin{pmatrix} 1 & 0 & -\frac{\rho}{\sqrt{2}} \\ 0 & 1 & -\frac{\rho}{\sqrt{2}} \\ -\frac{\rho}{\sqrt{2}}  &  -\frac{\rho}{\sqrt{2}} & 1 \end{pmatrix}) - \nonumber\\
& \Phi_2(-\Delta_{j(r-1)}, \Delta_{jr};0) + 
\Phi_3((-\Delta_{j(r+1)}, \Delta_{jr}, 0); \begin{pmatrix} 1 & 0 & -\frac{\rho}{\sqrt{2}} \\
0 & 1 & -\frac{\rho}{\sqrt{2}} \\
-\frac{\rho}{\sqrt{2}}  & -\frac{\rho}{\sqrt{2}} & 1
\end{pmatrix})] \nonumber\\
& = \sum_{r=1}^{l_{j}-1}[\Phi(-\Delta_{j(r-1)}) \Phi(\Delta_{jr}) - \Phi(-\Delta_{j(r+1)}) \Phi(\Delta_{jr})] - \nonumber\\
& \sum_{r=1}^{l_{j}-1}[\Phi_3((-\Delta_{j(r-1)}, \Delta_{jr}, 0); \begin{pmatrix} 1 & 0 & -\frac{\rho}{\sqrt{2}} \\
0 & 1 & -\frac{\rho}{\sqrt{2}} \\
-\frac{\rho}{\sqrt{2}}  & -\frac{\rho}{\sqrt{2}} & 1
\end{pmatrix})- \nonumber \\
& \Phi_3((-\Delta_{j(r+1)}, \Delta_{jr}, 0); \begin{pmatrix} 1 & 0 & -\frac{\rho}{\sqrt{2}} \\
0 & 1 & -\frac{\rho}{\sqrt{2}} \\
-\frac{\rho}{\sqrt{2}}  & -\frac{\rho}{\sqrt{2}} & 1
\end{pmatrix})] \nonumber\\
& = \Phi(\Delta_{j(l_j-1)}) - \sum_{r=1}^{l_{j}-1}[\Phi_3((-\Delta_{j(r-1)}, \Delta_{jr}, 0); \begin{pmatrix} 1 & 0 & -\frac{\rho}{\sqrt{2}} \\
0 & 1 & -\frac{\rho}{\sqrt{2}} \\
-\frac{\rho}{\sqrt{2}}  & -\frac{\rho}{\sqrt{2}} & 1
\end{pmatrix})- \nonumber\\
& \Phi_3((-\Delta_{j(r+1)}, \Delta_{jr}, 0); \begin{pmatrix} 1 & 0 & -\frac{\rho}{\sqrt{2}} \\
0 & 1 & -\frac{\rho}{\sqrt{2}} \\
-\frac{\rho}{\sqrt{2}}  & -\frac{\rho}{\sqrt{2}} & 1
\end{pmatrix})] \label{eq:trunc-deriv-1}
\end{align} 

\begin{align}
& \sum_{r=1}^{l_{j}-1}[T_{3r}^{(1)} + T_{5r}^{(2)}] \nonumber \\
& = \sum_{r=1}^{l_{j}-1}[\Phi_4((-\Delta_{jr}, \Delta_{jr}, \Delta_m, 0); S_{5a}(\rho)) - \Phi_4((-\Delta_{j(r+1)}, \Delta_{jr}, \Delta_m, 0); S_{5a}(\rho)) + \nonumber \\
& -\Phi_4((-\Delta_{jr}, \Delta_{jr}, \Delta_m, 0); S_{5a}(\rho)) + \Phi_4((-\Delta_{j(r-1)}, \Delta_{jr}, \Delta_m, 0); S_{5a}(\rho))] \nonumber\\
& = \sum_{r=1}^{l_{j}-1}[\Phi_4((-\Delta_{j(r-1)}, \Delta_{jr}, \Delta_m, 0); S_{5a}(\rho)) - \Phi_4((-\Delta_{j(r+1)}, \Delta_{jr}, \Delta_m, 0); S_{5a}(\rho))] \nonumber\\
& \sum_{r=1}^{l_{j}-1}[T_{2r}^{(2)} - T_{3r}^{(2)}] \nonumber\\
& = \sum_{r=1}^{l_{j}-1}[\Tilde{\Phi}_2((-\Delta_{jr}, -\infty), (-\Delta_{j(r-1)}, 0); \frac{\rho}{\sqrt{2}})] - \nonumber\\ & \sum_{r=1}^{l_{j}-1}[\Tilde{\Phi}_3((-\Delta_{jr}, -\infty, -\infty), (-\Delta_{j(r-1)}, \Delta_m, 0); \begin{pmatrix} 1 & 0 & \frac{\rho}{\sqrt{2}} \\
0 & 1 & \frac{1}{\sqrt{2}} \\
\frac{\rho}{\sqrt{2}}  & \frac{1}{\sqrt{2}} & 1
\end{pmatrix}]\nonumber \\
& = \sum_{r=1}^{l_{j}-1}[\Phi_2( (-\Delta_{j(r-1)}, 0); \frac{\rho}{\sqrt{2}}) - \Phi_2( (-\Delta_{jr}, 0); \frac{\rho}{\sqrt{2}})] - \nonumber\\ 
& \sum_{r=1}^{l_{j}-1}[\Phi_3( (-\Delta_{j(r-1)}, \Delta_m, 0); \begin{pmatrix} 1 & 0 & \frac{\rho}{\sqrt{2}} \\
0 & 1 & \frac{1}{\sqrt{2}} \\
\frac{\rho}{\sqrt{2}}  & \frac{1}{\sqrt{2}} & 1
\end{pmatrix}) - \Phi_3( (-\Delta_{jr}, \Delta_m, 0); \begin{pmatrix} 1 & 0 & \frac{\rho}{\sqrt{2}} \\
0 & 1 & \frac{1}{\sqrt{2}} \\
\frac{\rho}{\sqrt{2}}  & \frac{1}{\sqrt{2}} & 1
\end{pmatrix})] \nonumber\\
& = \frac{1}{2} -  \Phi_2(-\Delta_{j(l_j-1)},0; \frac{\rho}{\sqrt{2}}) - ( \Phi_2((\Delta_m, 0); \frac{1}{\sqrt{2}}) -  \Phi_3( (-\Delta_{j(l_j-1)}, \Delta_m, 0); \begin{pmatrix} 1 & 0 & \frac{\rho}{\sqrt{2}} \\
0 & 1 & \frac{1}{\sqrt{2}} \\
\frac{\rho}{\sqrt{2}}  & \frac{1}{\sqrt{2}} & 1
\end{pmatrix})) \nonumber\\
& =\Phi_2(\Delta_{j(l_j-1)},0; -\frac{\rho}{\sqrt{2}}) - \Phi_3( (\Delta_{j(l_j-1)}, \Delta_m, 0); \begin{pmatrix} 1 & 0 & \frac{-\rho}{\sqrt{2}} \\
0 & 1 & \frac{1}{\sqrt{2}} \\
\frac{-\rho}{\sqrt{2}}  & \frac{1}{\sqrt{2}} & 1
\end{pmatrix})) \nonumber\\
& = \Phi_3( (\Delta_{j(l_j-1)}, -\Delta_m, 0); \begin{pmatrix} 1 & 0 & \frac{-\rho}{\sqrt{2}} \\
0 & 1 & \frac{-1}{\sqrt{2}} \nonumber \\
\frac{-\rho}{\sqrt{2}}  & \frac{-1}{\sqrt{2}} & 1
\end{pmatrix})) \nonumber\\
\label{eq:trunc-deriv-2}
\end{align}

\begin{align}
& \sum_{r=1}^{l_{j}-1}[T_{1r}^{(1)} -  T_{1r}^{(2)}] \nonumber\\
& =\sum_{r=1}^{l_{j}-1} [\Tilde{\Phi}_2((\Delta_{jr},-\infty),(\Delta_{j(r+1)},-\Delta_m);-\rho) \Tilde{\Phi}_2((-\infty,-\infty),(\Delta_{jr},\Delta_{m});\rho) -\nonumber\\
&\Tilde{\Phi}_2((\Delta_{j(r-1)},-\infty),(\Delta_{jr},-\Delta_m);-\rho) \Tilde{\Phi}_2((-\infty,-\infty),(-\Delta_{jr},\Delta_{m});-\rho)]\nonumber \\
& = \sum_{r=1}^{l_{j}-1}[\Phi_2 (\Delta_{jr},\Delta_{m}; \rho)\{\Phi_2(\Delta_{jr},\Delta_m;\rho) + \Phi(\Delta_{j(r+1)}) - \Phi(\Delta_{jr}) - \Phi_2(\Delta_{j(r+1)},\Delta_m;\rho)\} - \nonumber\\
 & (\Phi(\Delta_m) - \Phi_2(\Delta_{jr},\Delta_m;\rho))\{\Phi_2(\Delta_{j(r-1)},\Delta_m;\rho) + \Phi(\Delta_{jr}) - \Phi(\Delta_{j(r-1)}) - \Phi_2(\Delta_{jr},\Delta_m;\rho)\}] \nonumber\\
 & = \sum_{r=1}^{l_{j}-1}[\Phi_2 (\Delta_{jr},\Delta_{m}; \rho)\{\Phi_2(\Delta_{jr},\Delta_m;\rho) + \Phi(\Delta_{j(r+1)}) - \Phi(\Delta_{jr}) - \nonumber\\
& \Phi_2(\Delta_{j(r+1)},\Delta_m;\rho) + \Phi_2(\Delta_{j(r-1)},\Delta_m;\rho) + \Phi(\Delta_{jr}) - \Phi(\Delta_{j(r-1)}) - \Phi_2(\Delta_{jr},\Delta_m;\rho) + \Phi(\Delta_m)\}- \nonumber\\
& \Phi(\Delta_m)\{\Phi_2(\Delta_{j(r-1)},\Delta_m;\rho) + \Phi(\Delta_{jr}) - \Phi(\Delta_{j(r-1)})\}]  \nonumber\\
& = \sum_{r=1}^{l_{j}-1}[\Phi_2 (\Delta_{jr},\Delta_{m}; \rho)\{\Phi_2(\Delta_{jr},\Delta_m;\rho) + \Phi(\Delta_{j(r+1)}) - \Phi(\Delta_{jr}) - \Phi_2(\Delta_{j(r+1)},\Delta_m;\rho)\} - \nonumber\\
 & (\Phi(\Delta_m) - \Phi_2(\Delta_{jr},\Delta_m;\rho))\{\Phi_2(\Delta_{j(r-1)},\Delta_m;\rho) + \Phi(\Delta_{jr}) - \Phi(\Delta_{j(r-1)}) - \Phi_2(\Delta_{jr},\Delta_m;\rho)\}] + \nonumber\\
& = \sum_{r=1}^{l_{j}-1}[\Phi_2 (\Delta_{jr},\Delta_{m}; \rho)\{\Phi(\Delta_{j(r+1)}) - \Phi(\Delta_{j(r-1)})\} + \nonumber\\
& \{\Phi_2 (\Delta_{jr},\Delta_{m}; \rho) \Phi_2(\Delta_{j(r-1)},\Delta_m;\rho) - \Phi_2 (\Delta_{j(r+1)},\Delta_{m}; \rho) \Phi_2(\Delta_{jr},\Delta_m;\rho)\}]  - \nonumber\\ 
& \Phi(\Delta_m)[\sum_{r=1}^{l_{j}-1}\{\Phi_2(\Delta_{j(r-1)},\Delta_m;\rho) - \Phi_2(\Delta_{jr},\Delta_m;\rho)\} + \sum_{r=1}^{l_{j}-1} \{\Phi(\Delta_{jr}) - \Phi(\Delta_{j(r-1)})\}] + \nonumber\\
& = \sum_{r=1}^{l_{j}-1}[\Phi_2 (\Delta_{jr},\Delta_{m}; \rho)\{\Phi(\Delta_{j(r+1)}) - \Phi(\Delta_{j(r-1)})\}] \nonumber\\
& - \Phi(\Delta_m) \Phi_2(\Delta_{j(l_j-1)},\Delta_m;\rho) +  2 \Phi(\Delta_m)\Phi_2(\Delta_{j(l_j-1)},\Delta_m;\rho) -  \Phi(\Delta_m) \Phi(\Delta_{j(l_j-1)}) \nonumber\\
& =\sum_{r=1}^{l_{j}-1}[\Phi_2 (\Delta_{jr},\Delta_{m}; \rho)\{\Phi(\Delta_{j(r+1)}) - \Phi(\Delta_{j(r-1)})\}] -  \Phi(\Delta_m) \Phi(\Delta_{j(l_j-1)}) \nonumber\\
&=\sum_{r=1}^{l_{j}-1}[\Phi_2 (\Delta_{jr},\Delta_{m}; \rho)\{\Phi(-\Delta_{j(r-1)}) - \Phi(\Delta_{j(r+1)})\}] -  \Phi(\Delta_m) \Phi(\Delta_{j(l_j-1)})\nonumber\\
& = \sum_{r=1}^{l_{j}-1}[\Phi_3( (-\Delta_{j(r-1)}, \Delta_{jr}, \Delta_{m}); \begin{pmatrix} 1 & 0 & 0 \\
0 & 1 & \rho \\
0  & \rho & 1
\end{pmatrix})- \Phi_3( (-\Delta_{j(r+1)}, \Delta_{jr}, \Delta_{m}); \begin{pmatrix} 1 & 0 & 0 \\
0 & 1 & \rho \\
0  & \rho & 1
\end{pmatrix})] - \nonumber\\
& \Phi(\Delta_m) \Phi(\Delta_{j(l_j-1)}) \nonumber\\
\label{eq:trunc-deriv-3}
\end{align}    

Using the fact that,\\  $\Phi_3( (-\Delta_{jr}, \Delta_{js}, \Delta_{m}); \begin{pmatrix} 1 & 0 & 0 \\
0 & 1 & \rho \\
0  & \rho & 1
\end{pmatrix}) = \Phi_4((-\Delta_{jr}, \Delta_{js}, \Delta_m, 0); \begin{pmatrix}
1 & 0 & 0 & \frac{\rho}{\sqrt{2}}\\
0 & 1 & \rho & \frac{\rho}{\sqrt{2}} \\
0 & \rho & 1 & \frac{1}{\sqrt{2}} \\
\frac{\rho}{\sqrt{2}} & \frac{\rho}{\sqrt{2}} & \frac{1}{\sqrt{2}} & 1
\end{pmatrix}) + \\ \Phi_4((-\Delta_{jr}, \Delta_{js}, \Delta_m, 0); \begin{pmatrix}
1 & 0 & 0 & -\frac{\rho}{\sqrt{2}}\\
0 & 1 & \rho & -\frac{\rho}{\sqrt{2}} \\
0 & \rho & 1 & -\frac{1}{\sqrt{2}} \\
-\frac{\rho}{\sqrt{2}} & -\frac{\rho}{\sqrt{2}} & -\frac{1}{\sqrt{2}} & 1
\end{pmatrix})$ and combining equations \eqref{eq:trunc-deriv-2} and \eqref{eq:trunc-deriv-3} we get - 

\begin{equation}
\begin{split}
\sum_{r=1}^{l_{j}-1}[T_{1r}^{(1)} -  T_{1r}^{(2)} - T_{3r}^{(1)} - T_{5r}^{(2)}] & = \sum_{r=1}^{l_{j}-1}[\Phi_4((-\Delta_{j(r-1)}, \Delta_{jr}, \Delta_m, 0); \begin{pmatrix}
1 & 0 & 0 & -\frac{\rho}{\sqrt{2}}\\
0 & 1 & \rho & -\frac{\rho}{\sqrt{2}} \\
0 & \rho & 1 & -\frac{1}{\sqrt{2}} \\
-\frac{\rho}{\sqrt{2}} & -\frac{\rho}{\sqrt{2}} & -\frac{1}{\sqrt{2}} & 1
\end{pmatrix}) -\\
& \Phi_4((-\Delta_{j(r+1)}, \Delta_{jr}, \Delta_m, 0); \begin{pmatrix}
1 & 0 & 0 & -\frac{\rho}{\sqrt{2}}\\
0 & 1 & \rho & -\frac{\rho}{\sqrt{2}} \\
0 & \rho & 1 & -\frac{1}{\sqrt{2}} \\
-\frac{\rho}{\sqrt{2}} & -\frac{\rho}{\sqrt{2}} & -\frac{1}{\sqrt{2}} & 1
\end{pmatrix})] - \Phi(\Delta_m) \Phi(\Delta_{j(l_j-1)})
\end{split}
\label{eq:trunc-deriv-4}
\end{equation}

Next, using similar result that, \\ $\Phi_3((-\Delta_{jr}, \Delta_{js}, 0); \begin{pmatrix} 1 & 0 & -\frac{\rho}{\sqrt{2}} \\
0 & 1 & -\frac{\rho}{\sqrt{2}} \\
-\frac{\rho}{\sqrt{2}}  & -\frac{\rho}{\sqrt{2}} & 1
\end{pmatrix} = \\ \Phi_4((-\Delta_{jr}, \Delta_{js}, 0, \Delta_m); \begin{pmatrix} 1 & 0 & -\frac{\rho}{\sqrt{2}} & 0 \\
0 & 1 & -\frac{\rho}{\sqrt{2}} & \rho \\
-\frac{\rho}{\sqrt{2}}  & -\frac{\rho}{\sqrt{2}} & 1 & -\frac{1}{\sqrt{2}} \\
0 & \rho & -\frac{1}{\sqrt{2}} & 1
\end{pmatrix}) + \\ \Phi_4((-\Delta_{jr}, \Delta_{js}, 0, -\Delta_m); \begin{pmatrix} 1 & 0 & -\frac{\rho}{\sqrt{2}} & 0 \\
0 & 1 & -\frac{\rho}{\sqrt{2}} & -\rho \\
-\frac{\rho}{\sqrt{2}}  & -\frac{\rho}{\sqrt{2}} & 1 & \frac{1}{\sqrt{2}} \\
0 & -\rho & \frac{1}{\sqrt{2}} & 1
\end{pmatrix})$, and combining equations \eqref{eq:trunc-deriv-1} and \eqref{eq:trunc-deriv-4} we get - 

\begin{equation}
\begin{split}
\sum_{r=1}^{l_{j}-1}[T_{1r}^{(1)} -  T_{1r}^{(2)} - T_{3r}^{(1)} - T_{5r}^{(2)} +T_{2r}^{(1)} + T_{4r}^{(2)}] & = \Phi(\Delta_{j(l_j-1)})(1-\Phi(\Delta_m)) -\\
& \sum_{r=1}^{l_{j}-1}[\Phi_4((-\Delta_{j(r-1)}, \Delta_{jr}, 0, -\Delta_m); \begin{pmatrix} 1 & 0 & -\frac{\rho}{\sqrt{2}} & 0 \\
0 & 1 & -\frac{\rho}{\sqrt{2}} & -\rho \\
-\frac{\rho}{\sqrt{2}}  & -\frac{\rho}{\sqrt{2}} & 1 & \frac{1}{\sqrt{2}} \\
0 & -\rho & \frac{1}{\sqrt{2}} & 1
\end{pmatrix})-\\
& \Phi_4((-\Delta_{j(r+1)}, \Delta_{jr}, 0, -\Delta_m); \begin{pmatrix} 1 & 0 & -\frac{\rho}{\sqrt{2}} & 0 \\
0 & 1 & -\frac{\rho}{\sqrt{2}} & -\rho \\
-\frac{\rho}{\sqrt{2}}  & -\frac{\rho}{\sqrt{2}} & 1 & \frac{1}{\sqrt{2}} \\
0 & -\rho & \frac{1}{\sqrt{2}} & 1
\end{pmatrix})]
\end{split}
\label{eq:trunc-deriv-5}
\end{equation}

Combining equations \eqref{eq:trunc-deriv-00} and \eqref{eq:trunc-deriv-01} and putting the values from equation \eqref{eq:trunc-deriv-5}, we get -

\begin{align}
& F_{to}(\rho; \Delta_j, \Delta_m) \nonumber\\
& = 2 \sum_{r=1}^{l_{j}-1}[T_{1r}^{(1)} -  T_{1r}^{(2)} - T_{3r}^{(1)} - T_{5r}^{(2)} +T_{2r}^{(1)} + T_{4r}^{(2)} - T_{2r}^{(2)} + T_{3r}^{(2)}] \nonumber \\
& = 2 \Phi(\Delta_{j(l_j-1)})(1-\Phi(\Delta_m)) - 2 \Phi_3( (\Delta_{j(l_j-1)}, -\Delta_m, 0); \begin{pmatrix} 1 & 0 & \frac{-\rho}{\sqrt{2}} \\
0 & 1 & \frac{-1}{\sqrt{2}} \\
\frac{-\rho}{\sqrt{2}}  & \frac{-1}{\sqrt{2}} & 1
\end{pmatrix})) - \nonumber \\
&  2 \sum_{r=1}^{l_{j}-1}[\Phi_4((-\Delta_{j(r-1)}, \Delta_{jr}, 0, -\Delta_m); \begin{pmatrix} 1 & 0 & -\frac{\rho}{\sqrt{2}} & 0 \\
0 & 1 & -\frac{\rho}{\sqrt{2}} & -\rho \\
-\frac{\rho}{\sqrt{2}}  & -\frac{\rho}{\sqrt{2}} & 1 & \frac{1}{\sqrt{2}} \\
0 & -\rho & \frac{1}{\sqrt{2}} & 1
\end{pmatrix})- \nonumber \\
& \Phi_4((-\Delta_{j(r+1)}, \Delta_{jr}, 0, -\Delta_m); \begin{pmatrix} 1 & 0 & -\frac{\rho}{\sqrt{2}} & 0 \\
0 & 1 & -\frac{\rho}{\sqrt{2}} & -\rho \\
-\frac{\rho}{\sqrt{2}}  & -\frac{\rho}{\sqrt{2}} & 1 & \frac{1}{\sqrt{2}} \\
0 & -\rho & \frac{1}{\sqrt{2}} & 1
\end{pmatrix})] \nonumber \\
& = 2 \Phi(\Delta_{j(l_j-1)})(1-\Phi(\Delta_m)) - 2 \Phi_3( (\Delta_{j(l_j-1)}, -\Delta_m, 0); \begin{pmatrix} 1 & 0 & \frac{-\rho}{\sqrt{2}} \\
0 & 1 & \frac{-1}{\sqrt{2}} \\
\frac{-\rho}{\sqrt{2}}  & \frac{-1}{\sqrt{2}} & 1
\end{pmatrix})) - \nonumber \\
&  2 \sum_{r=1}^{l_{j}-1}[\Phi_4((-\Delta_{j(r-1)}, \Delta_{jr}, -\Delta_m, 0); \begin{pmatrix} 1 & 0 &  0 & -\frac{\rho}{\sqrt{2}} \\
0 & 1 & -\rho & -\frac{\rho}{\sqrt{2}} \\
0  & -\rho & 1 & \frac{1}{\sqrt{2}} \\
-\frac{\rho}{\sqrt{2}} & -\frac{\rho}{\sqrt{2}} & \frac{1}{\sqrt{2}} & 1
\end{pmatrix})- \nonumber \\
& \Phi_4((-\Delta_{j(r+1)}, \Delta_{jr}, -\Delta_m, 0); \begin{pmatrix} 1 & 0 &  0 & -\frac{\rho}{\sqrt{2}} \\
0 & 1 & -\rho & -\frac{\rho}{\sqrt{2}} \\
0  & -\rho & 1 & \frac{1}{\sqrt{2}} \\
-\frac{\rho}{\sqrt{2}} & -\frac{\rho}{\sqrt{2}} & \frac{1}{\sqrt{2}} & 1
\end{pmatrix})] \nonumber \\
& = 2 \Phi_3( (\Delta_{j(l_j-1)}, -\Delta_m, 0); \begin{pmatrix} 1 & 0 & \frac{\rho}{\sqrt{2}} \\
0 & 1 & \frac{1}{\sqrt{2}} \\
\frac{\rho}{\sqrt{2}}  & \frac{1}{\sqrt{2}} & 1
\end{pmatrix})) - \nonumber \\
&  2 \sum_{r=1}^{l_{j}-1}[\Phi_4((\Delta_{j(r+1)}, \Delta_{jr}, -\Delta_m, 0); \begin{pmatrix} 1 & 0 &  0 & \frac{\rho}{\sqrt{2}} \\
0 & 1 & -\rho & -\frac{\rho}{\sqrt{2}} \\
0  & -\rho & 1 & \frac{1}{\sqrt{2}} \\
\frac{\rho}{\sqrt{2}} & -\frac{\rho}{\sqrt{2}} & \frac{1}{\sqrt{2}} & 1
\end{pmatrix})- \nonumber \\
& \Phi_4((\Delta_{j(r-1)}, \Delta_{jr}, -\Delta_m, 0); \begin{pmatrix} 1 & 0 &  0 & \frac{\rho}{\sqrt{2}} \\
0 & 1 & -\rho & -\frac{\rho}{\sqrt{2}} \\
0  & -\rho & 1 & \frac{1}{\sqrt{2}} \\
\frac{\rho}{\sqrt{2}} & -\frac{\rho}{\sqrt{2}} & \frac{1}{\sqrt{2}} & 1
\end{pmatrix})] \label{eq:bridge-to}
\end{align}

\textbf{Proof of Theorem \ref{thm:bridge-inv}}:

We'll show that the bridging functions are strictly increasing in $\rho \in (-1,1)$ and hence invertible. 

\textbf{Case 1: Ordinal-Ordinal}

Let's recall the bridging function $F_{\rm oo}(\rho; \Delta_j, \Delta_k) = 2 \sum_{r=1}^{l_j-1}\sum_{s=1}^{l_k-1}[\Phi_2(\Delta_{jr},\Delta_{ks}; \rho)\\\{ \Phi_2(\Delta_{j(r+1)},\Delta_{k(s+1)};\rho) - \Phi_2(\Delta_{j(r+1)},\Delta_{k(s-1)}; \rho)\}] -  2 \sum_{r=1}^{l_j-1}\Phi({\Delta_{jr}})\Phi_2(\Delta_{j(r+1)},\Delta_{k(l_k-1)}; \rho)$. For the proof, we use the fact that $\partial_{r,s}^{(2)} = \frac{\partial{\Phi_2(\Delta_{jr},\Delta_{ks},\rho)}}{{\partial \rho}} > 0$ \citep[Lemma A1]{yoon2020sparse}. Also note that, since, $\Phi_2(\Delta_{jlj}, \Delta_{ks}, \rho) = \Phi(\Delta_{ks})$ and $\Phi_2(\Delta_{jr}, \Delta_{kl_k}, \rho) = \Phi(\Delta_{jr})$, we have $\partial_{r,l_k}^{2} = \partial_{l_j,s}^{2} =0$. 

\begin{equation}
\begin{aligned}
& \frac{\partial{F_{\rm oo}(\rho; \Delta_j, \Delta_k)}}{\partial \rho} \\
& =  2 \sum_{r=1}^{l_j-1}\sum_{s=1}^{l_k-1}[\Phi_2(\Delta_{jr},\Delta_{ks}; \rho)\{\partial_{(r+1),(s+1)}^{(2)} - \partial_{(r+1),(s-1)}^{(2)}\} +  \partial_{r,s}^{(2)} \{ \Phi_2(\Delta_{j(r+1)},\Delta_{k(s+1)};\rho) - \Phi_2(\Delta_{j(r+1)},\Delta_{k(s-1)}; \rho)\}]  - \\
& 2 \sum_{r=1}^{l_j-1}\Phi({\Delta_{jr}})\partial_{(r+1),(l_k-1)}^{(2)} \\
& = 2 \sum_{r=1}^{l_j-1}\sum_{s=1}^{l_k-2}[\partial_{r,s}^{(2)}\{\Phi_2(\Delta_{j(r+1)},\Delta_{k(s+1)};\rho) + \Phi_2(\Delta_{j(r-1)},\Delta_{k(s-1)};\rho) -\\
& \Phi_2(\Delta_{j(r+1)},\Delta_{k(s-1)};\rho) - \Phi_2(\Delta_{j(r-1)},\Delta_{k(s+1)};\rho)\}] +\\
& 2 \sum_{r=1}^{l_j-1} [\partial_{r,(l_k-1)}^{(2)}\{\Phi(\Delta_{j(r+1)}) - \Phi_2(\Delta_{j(r+1)}, \Delta_{k(l_k-2)}; \rho) + \Phi_2(\Delta_{j(r-1)}, \Delta_{k(l_k-2)}; \rho)\}] - \\ 
& 2 \sum_{r=1}^{l_j-1}\Phi({\Delta_{jr}})\partial_{(r+1),(l_k-1)}^{(2)} \\
& = 2 \sum_{r=1}^{l_j-1}\sum_{s=1}^{l_k-2}[\partial_{r,s}^{(2)}[\Tilde{\Phi_2}((\Delta_{j(r-1)}, \Delta_{k(s-1)}), (\Delta_{j(r+1)},\Delta_{k(s+1)}); \rho)] +\\
& 2 \sum_{r=1}^{l_j-1} [\partial_{r,(l_k-1)}^{(2)}\{\Phi(\Delta_{j(r+1)}) - \Phi_2(\Delta_{j(r+1)}, \Delta_{k(l_k-2)}; \rho) + \Phi_2(\Delta_{j(r-1)}, \Delta_{k(l_k-2)}; \rho) - \Phi(\Delta_{j(r-1)})\}] - \\
& \Phi(\Delta_{j0}) \partial_{1,(l_k-1)}^{(2)}\\
& = 2 \sum_{r=1}^{l_j-1}\sum_{s=1}^{l_k-2}[\partial_{r,s}^{(2)}[\Tilde{\Phi_2}((\Delta_{j(r-1)}, \Delta_{k(s-1)}), (\Delta_{j(r+1)},\Delta_{k(s+1)}); \rho)] + \\
& 2 \sum_{r=1}^{l_j-1} [\partial_{r,(l_k-1)}^{(2)} \Tilde{\Phi_2}((\Delta_{j(r-1)}, \Delta_{k(l_k-2)}), (\Delta_{j(r+1)},\infty)); \rho)] \\
& > 0 
\end{aligned}    
\end{equation}

\textbf{Case 2: Ordinal-Binary}

Special case of the last proof.

\textbf{Case 3: Ordinal-Truncated}

Let's recall the bridging function (\eqref{eq:bridge-to})- \\ \begin{align*}
F_{to}(\rho; \Delta_j, \Delta_m) & =2 \Phi_3( (\Delta_{j(l_j-1)}, -\Delta_m, 0); \begin{pmatrix} 1 & 0 & \frac{\rho}{\sqrt{2}} \\
0 & 1 & \frac{1}{\sqrt{2}} \\
\frac{\rho}{\sqrt{2}}  & \frac{1}{\sqrt{2}} & 1
\end{pmatrix})) - \\ 
& 2 \sum_{r=1}^{l_{j}-1}[\Phi_4((\Delta_{j(r+1)}, \Delta_{jr}, -\Delta_m, 0); \begin{pmatrix} 1 & 0 &  0 & \frac{\rho}{\sqrt{2}} \\
0 & 1 & -\rho & -\frac{\rho}{\sqrt{2}} \\
0  & -\rho & 1 & \frac{1}{\sqrt{2}} \\
\frac{\rho}{\sqrt{2}} & -\frac{\rho}{\sqrt{2}} & \frac{1}{\sqrt{2}} & 1
\end{pmatrix})-
\\
& \Phi_4((\Delta_{j(r-1)}, \Delta_{jr}, -\Delta_m, 0); \begin{pmatrix} 1 & 0 &  0 & \frac{\rho}{\sqrt{2}} \\
0 & 1 & -\rho & -\frac{\rho}{\sqrt{2}} \\
0  & -\rho & 1 & \frac{1}{\sqrt{2}} \\
\frac{\rho}{\sqrt{2}} & -\frac{\rho}{\sqrt{2}} & \frac{1}{\sqrt{2}} & 1
\end{pmatrix})] \\
& = T_1   - \sum_{r=1}^{l_{j}-1}[T_{2r} - T_{3r}] 
\end{align*}

To calculate the derivative of $F_{to}(\rho; \Delta_j, \Delta_m)$, we use the following result from \cite[Lemma A1]{yoon2020sparse} - 
\begin{lemma}
For any constants $a_1,\cdots, a_d$, let $\Phi_d(a_1,a_2, \cdots, a_d, S(\rho))$ denote the cdf of a d-variate standard normal distribution with parametrized covariance matrix $S(\rho)= ((S_{ij}(\rho))$. Then, there exist $h_{ij}(t)>0$ for all $t\in (-1,1)$ such that - 
$$\frac{\partial{\Phi_d(a_1,a_2, \cdots, a_d)}}{\partial{\rho}} = \sum_{1\leq i < j \leq d} h_{ij}(t) \frac{\partial{S_{ij}(\rho)}}{\partial{\rho}}.$$
Here, $h_{12}(t) = \frac{\partial{\Phi_d(a_1,a_2, \cdots, a_d)}}{\partial{\rho_{12}(t)}} = \int_{-\infty}^{-a_3} \int_{-\infty}^{-a_4} \cdots \int_{-\infty}^{-a_d} \phi_d(a_1, a_2, x_3, \cdots, x_d; S(\rho)) dx_3 dx_4 \cdots dx_d$, where  $\phi_d(\cdot)$ denotes the density of the multivariate normal distribution. Other $h_{ij}(t)$'s are derived analogously. 
\end{lemma}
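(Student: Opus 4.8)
The plan is to recognize that the weights $h_{ij}(t)$ in the statement are nothing but the partial derivatives $\partial \Phi_d/\partial S_{ij}$ of the CDF with respect to the individual off-diagonal entries of the correlation matrix, so that the asserted identity is just the multivariate chain rule
$$\frac{\partial \Phi_d}{\partial \rho} = \sum_{1 \le i < j \le d} \frac{\partial \Phi_d}{\partial S_{ij}} \frac{\partial S_{ij}(\rho)}{\partial \rho},$$
once one notes that the diagonal entries $S_{ii}\equiv 1$ are constant in $\rho$ and contribute nothing. With this observation the whole content of the lemma reduces to two facts about a single entry: (a) an explicit integral representation of $\partial \Phi_d/\partial S_{ij}$, and (b) its strict positivity. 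This is precisely Lemma A1 of \cite{yoon2018sparse}, and the representation is a form of Plackett's classical reduction formula for normal integrals, so I would reprove it directly rather than invoke anything deep.

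For part (a), the first step is to establish the Gaussian ``heat-type'' identity that, for $i\neq j$, the standard $d$-variate normal density satisfies
$$\frac{\partial \phi_d(x; S)}{\partial S_{ij}} = \frac{\partial^2 \phi_d(x; S)}{\partial x_i \partial x_j}.$$
The cleanest derivation uses the Fourier representation $\phi_d(x;S)=(2\pi)^{-d}\int_{\R^d} e^{-\mathrm{i} t^\top x}\, e^{-\frac12 t^\top S t}\, dt$: differentiating the integrand in $S_{ij}$ brings down the factor $-t_i t_j$ (using the symmetry of $S$), which is exactly what $\partial^2_{x_i x_j}$ extracts from $e^{-\mathrm{i} t^\top x}$, so the two sides agree termwise. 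Armed with this identity, I would differentiate $\Phi_d(a;S)=\int_{-\infty}^{a_1}\!\cdots\!\int_{-\infty}^{a_d}\phi_d(x;S)\,dx$ under the integral sign, replace $\partial_{S_{ij}}\phi_d$ by $\partial^2_{x_i x_j}\phi_d$, and then antidifferentiate once in $x_i$ and once in $x_j$. Because $\phi_d$ and its first derivatives vanish as any coordinate tends to $-\infty$, those two integrations collapse to evaluation at $x_i=a_i,\ x_j=a_j$, leaving exactly the stated $(d-2)$-fold integral of $\phi_d(\dots,a_i,\dots,a_j,\dots)$ over the remaining coordinates (with the sign/limit convention matching whichever orthant $\Phi_d$ is taken to denote). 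Part (b) is then immediate: $h_{ij}(t)$ is the integral of the strictly positive density $\phi_d>0$ over a product of half-lines, a set of positive Lebesgue measure whenever the cutoffs $a_k$ are finite, hence $h_{ij}(t)>0$ for every $t\in(-1,1)$ at which $S(\rho)$ remains positive definite.

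I expect the main obstacle to be analytic bookkeeping rather than the underlying idea. I must justify differentiating under the integral sign, which I would do by dominated convergence, using that on compact subsets of $(-1,1)$ the matrix $S(\rho)$ stays bounded away from singularity so that $\phi_d$ and its derivatives are uniformly integrable; and I must verify carefully that the boundary contributions at $-\infty$ in the double antidifferentiation genuinely vanish. A secondary subtlety worth flagging is that the off-diagonal entries of a correlation matrix cannot in general be perturbed one at a time while keeping $S$ a valid (positive definite) correlation matrix. This is harmless here, however, because the chain rule only ever contracts each $\partial_{S_{ij}}\Phi_d$ against the admissible direction $\partial_\rho S_{ij}(\rho)$, along which positive definiteness is preserved by hypothesis; the partial derivatives $\partial_{S_{ij}}\Phi_d$ themselves are well defined as soon as $\phi_d$ is smooth in a neighborhood of the realized $S(\rho)$.
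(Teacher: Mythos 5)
Your proof is correct, but it is worth noting that the paper does not actually prove this lemma at all: it is imported verbatim by citation to Lemma A1 of \cite{yoon2018sparse}, and is used as a black box in the monotonicity argument for $F_{to}$. What you have written is a genuine self-contained derivation via Plackett's reduction formula, i.e.\ the heat-type identity $\partial \phi_d/\partial S_{ij} = \partial^2 \phi_d/\partial x_i \partial x_j$ for $i\neq j$, followed by differentiation under the integral sign and two antidifferentiations that collapse to evaluation at $x_i=a_i$, $x_j=a_j$. This is the classical route and is essentially the argument underlying the cited result, so you have not found a shortcut so much as supplied the proof the paper omits; what your version buys is transparency about where positivity comes from (an integral of a strictly positive density over a set of positive measure) and an explicit accounting of the analytic hypotheses (local uniform integrability on compact subsets of $(-1,1)$ where $S(\rho)$ stays positive definite, vanishing boundary terms). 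Two of your side remarks are well placed: the integration limits $-a_k$ in the displayed formula for $h_{12}$ do not literally match what falls out of the lower-orthant CDF $\int_{-\infty}^{a_1}\cdots\int_{-\infty}^{a_d}\phi_d$, and you are right to treat this as an orthant-convention discrepancy in the statement rather than a mathematical obstruction; and the observation that individual off-diagonal entries of a correlation matrix cannot be perturbed independently is correctly dismissed, since $\partial \Phi_d/\partial S_{ij}$ is well defined on the open cone of positive definite symmetric matrices and the chain rule only ever contracts it against the admissible direction $\partial_\rho S(\rho)$.
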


Using the above result, the derivative look like this - 

\begin{align*}
& \frac{\partial{F_{to}(\rho; \Delta_j, \Delta_m)}}{\rho} \\
& = \frac{\partial{T_1}}{\rho}  - \sum_{r=1}^{l_{j}-1}[ \frac{\partial{T_{2r}}}{\rho} + \frac{\partial{T_{3r}}}{\rho}] \\
& =2 h_{13}^{(1)}. (\frac{1}{\sqrt{2}}) - \\
& 2 \sum_{r=1}^{l_{j}-1}[h_{14}^{(2r)}. (\frac{1}{\sqrt{2}}) + h_{23}^{(2r)}. (-1) +  h_{24}^{(2r)}. (-\frac{1}{\sqrt{2}}) - h_{14}^{(3r)}. (\frac{1}{\sqrt{2}}) - h_{23}^{(3r)}. (-1) - h_{24}^{(3r)}. (-\frac{1}{\sqrt{2}})]\\
& = 2 \sum_{r=1}^{l_{j}-1}[h_{23}^{(2r)}- h_{23}^{(3r)}] + 2 \frac{1}{\sqrt{2}} \sum_{r=1}^{l_{j}-1}[h_{24}^{(2r)}- h_{24}^{(3r)}] + 2 \frac{1}{\sqrt{2}}[h_{13}^{(1)} -\sum_{r=1}^{l_{j}-1}\{h_{14}^{(2r)} - h_{14}^{(3r)}\}]\\
& = 2 \sum_{r=1}^{l_{j}-1}[\int_{-\infty}^{0} \int_{-\infty}^{\Delta_{j(r+1)}} \phi_4(x_1, \Delta_{jr},-\Delta_{m}, 0; S_5(\rho)) dx_1 dx_4 -\\
& \int_{-\infty}^{0} \int_{-\infty}^{\Delta_{j(r-1)}} \phi_4(x_1, \Delta_{jr},-\Delta_{m}, 0; S_5(\rho))dx_1 dx_4] + \\
& \sqrt{2} \sum_{r=1}^{l_{j}-1}[\int_{-\infty}^{-\Delta_m} \int_{-\infty}^{\Delta_{j(r+1)}} \phi_4(x_1, \Delta_{jr},-\Delta_{m}, 0; S_5(\rho)) dx_1 dx_3 - \\
& \int_{-\infty}^{-\Delta_m} \int_{-\infty}^{\Delta_{j(r-1)}} \phi_4(x_1, \Delta_{jr},x_3, 0; S_5(\rho)) dx_1 dx_3] + \\ 
& \sqrt{2}[\int_{-\infty}^{-\Delta_m} \phi_3(\Delta_{j(l_j-1)}, x_3, 0; S_{3a}(\rho))dx_3 - \\
& \sum_{r=1}^{l_{j}-1}\{\int_{-\infty}^{-\Delta_m} \int_{-\infty}^{\Delta_{jr}} \phi_4(\Delta_{j(r+1)}, x_2,x_3, 0; S_5(\rho)) dx_2 dx_3 - \int_{-\infty}^{-\Delta_m} \int_{-\infty}^{\Delta_{jr}} \phi_4(\Delta_{j(r-1)}, x_2,x_3, 0; S_5(\rho)) dx_2 dx_3\}]\\
& = 2 \sum_{r=1}^{l_{j}-1}[\int_{-\infty}^{0} \int_{\Delta_{j(r-1)}}^{\Delta_{j(r+1)}} \phi_4(x_1, \Delta_{jr},-\Delta_{m}, 0; S_5(\rho)) dx_1 dx_4 + \\
& \sqrt{2} \sum_{r=1}^{l_{j}-1}[\int_{-\infty}^{-\Delta_m} \int_{\Delta{j(r-1)}}^{\Delta_{j(r+1)}} \phi_4(x_1, \Delta_{jr},-\Delta_{m}, 0; S_5(\rho)) dx_1 dx_3 + \\
& \sqrt{2}[\int_{-\infty}^{-\Delta_m} \phi_3(\Delta_{j(l_j-1)}, x_3, 0; S_{3a}(\rho))dx_3 - \\
& \{\sum_{r=1}^{l_{j}-1}\int_{-\infty}^{-\Delta_m} \int_{-\infty}^{\Delta_{jr}} \phi_4(\Delta_{j(r+1)}, x_2,x_3, 0; S_5(\rho)) dx_2 dx_3 - \\
& \sum_{r=3}^{l_{j}-1} \int_{-\infty}^{-\Delta_m} \int_{-\infty}^{\Delta_{j(r-2)}} \phi_4(\Delta_{j(r-1)}, x_2,x_3, 0; S_5(\rho)) dx_2 dx_3\} - \\
& \sum_{r=3}^{l_{j}-1} \{\int_{-\infty}^{-\Delta_m} \int_{\Delta_j(r-2)}^{\Delta_{jr}} \phi_4(\Delta_{j(r-1)}, x_2,x_3, 0; S_5(\rho)) dx_2 dx_3\} -  \\
& \int_{-\infty}^{-\Delta_m} \int_{-\infty}^{\Delta_{j2}} \phi_4(\Delta_{j(r-1)}, x_2,x_3, 0; S_5(\rho)) dx_2 dx_3]\\
& > \sqrt{2}[\int_{-\infty}^{-\Delta_m} \phi_3(\Delta_{j(l_j-1)}, x_3, 0; S_{3a}(\rho))dx_3 - \int_{-\infty}^{-\Delta_m} \int_{-\infty}^{\Delta_{j(l_j-2)}} \phi_4(\Delta_{j(l_j-1)}, x_2,x_3, 0; S_5(\rho)) dx_2 dx_3] + \\
& \sum_{r=3}^{l_{j}-1} \{\int_{-\infty}^{-\Delta_m} \int_{\Delta_j(r-2)}^{\Delta_{jr}} \phi_4(\Delta_{j(r-1)}, x_2,x_3, 0; S_5(\rho)) dx_2 dx_3\} +  \int_{-\infty}^{-\Delta_m} \int_{-\infty}^{\Delta_{j2}} \phi_4(\Delta_{j(r-1)}, x_2,x_3, 0; S_5(\rho)) dx_2 dx_3\\
& > \sqrt{2}[\int_{-\infty}^{-\Delta_m}\int_{-\infty}^{\infty} \phi_4(\Delta_{j(l_j-1)}, x_2, x_3, 0; S_5(\rho))dx_2 dx_3 - \int_{-\infty}^{-\Delta_m} \int_{-\infty}^{\Delta_{j(l_j-2)}} \phi_4(\Delta_{j(l_j-1)}, x_2,x_3, 0; S_5(\rho)) dx_2 dx_3] \\
& = \sqrt{2}[\int_{-\infty}^{-\Delta_m}\int_{\Delta_{j(l_j-2)}}^{\infty} \phi_4(\Delta_{j(l_j-1)}, x_2, x_3, 0; S_5(\rho))dx_2 dx_3] \\
& > 0 
\end{align*}

\textbf{Proof of Theorem \ref{thm: asymp}}:

First, let's familiarize ourselves with some notations.  For a $p \times p$ correlation matrix $A$, we can get singular-value decomposition of $A$ as $A= Q \Lambda Q^T$, where $Q$ is an orthonormal matrix and $\Lambda= diag(\lambda_1,\cdots, \lambda_p)$ are the eigenvalues of $A$. Let's define $log A = Q log \Lambda Q^T$, where $log \Lambda = diag(log \lambda_1,\cdots, log \lambda_p)$. 

First we need to state the results for the asymptotic variance of Kendall's Tau as calculated in \cite{hoeffding1992class} and \cite{el2003spearman}. Using the results of U-statistics asymptotics, we state the results in the Lemma \ref{lem: kasymp} below. 

\begin{lemma}
Let $K_n$ be the Kendall's Tau matrix estimated from the data, then $\sqrt{n}(vecl(K_n) - vecl(K))$ is asymptotically normal with mean $0$ and variance-covariance matrix $V_K$, where, $K_{ij} = E(sgn((X_{i1}- X_{i2})(X_{j1} - X_{j2}))$ and $$V_{K_{(ij),(kl)}} = 4*(E(sgn((X_{i1}-X_{i2})(X_{j1}-X_{j2})(X_{k2}-X_{k3})(X_{l2}-X_{l3})))- (vecl(K)vecl(K)^T)_{(ij),(kl)})$$
$\{(ij),(kl)\}$ denotes the entries corresponding to the covariance of Kendall's tau between $(ij)$ and $(kl)$-th pair of variables. 
\label{lem: kasymp}
\end{lemma}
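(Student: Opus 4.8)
The plan is to recognize each entry of the Kendall's Tau matrix as a U-statistic of order two and then invoke the vector-valued central limit theorem for U-statistics. For a fixed pair of variables $(i,j)$, write the symmetric, bounded kernel
\begin{equation*}
h_{ij}(X_a, X_b) = sgn\{(X_{ia}-X_{ib})(X_{ja}-X_{jb})\}, \qquad |h_{ij}| \le 1,
\end{equation*}
where $X_a$ denotes the full observation vector of the $a$-th subject, so that $K_{n,ij} = \binom{n}{2}^{-1}\sum_{1\le a<b\le n} h_{ij}(X_a,X_b)$ and $\E[h_{ij}(X_1,X_2)] = K_{ij}$. Stacking these over all pairs $i<j$ realizes $vecl(K_n)$ as a single vector-valued U-statistic with kernel $\mathbf{h}=(h_{ij})_{i<j}$. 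Because every coordinate kernel is bounded, all second moments are automatically finite, which is the only integrability hypothesis the non-degenerate theory requires.

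The next step is to form the Hájek projection of each coordinate,
\begin{equation*}
g_{ij}(x) = \E[h_{ij}(x, X_2)] - K_{ij},
\end{equation*}
with $X_2$ an independent copy, and to apply Hoeffding's CLT for U-statistics \citep{hoeffding1992class} in its multivariate form. This gives the expansion $\sqrt{n}\,(vecl(K_n) - vecl(K)) = \tfrac{2}{\sqrt{n}}\sum_{a=1}^{n}\mathbf{g}(X_a) + o_P(1)$, where $\mathbf{g}=(g_{ij})_{i<j}$ is a mean-zero i.i.d. sequence; the ordinary multivariate CLT applied to the leading term then yields asymptotic normality with covariance $V_K = 4\,\mathrm{Cov}(\mathbf{g}(X_1))$ (the factor $4 = m^2$ is the squared kernel degree).

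It remains to identify the $(ij),(kl)$ entry of $V_K$ with the stated formula. I would compute
\begin{align*}
\mathrm{Cov}(g_{ij}(X_1), g_{kl}(X_1))
&= \E\big[\,\E[h_{ij}(X_1,X_2)\mid X_1]\,\E[h_{kl}(X_1,X_3)\mid X_1]\,\big] - K_{ij}K_{kl}\\
&= \E[h_{ij}(X_1,X_2)\,h_{kl}(X_1,X_3)] - K_{ij}K_{kl},
\end{align*}
where $X_2,X_3$ are independent copies and the second line uses their conditional independence given $X_1$. Since $sgn(a)\,sgn(b)=sgn(ab)$, the product of the two kernels collapses to the single sign term in the statement; multiplying by $4$ and relabeling the shared copy (mine shares copy $1$, the statement shares copy $2$, which agree by exchangeability and kernel symmetry) reproduces $V_{K(ij),(kl)}$ and $(vecl(K)vecl(K)^T)_{(ij),(kl)}=K_{ij}K_{kl}$ exactly.

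The main obstacle is not analytic difficulty but careful bookkeeping of the three i.i.d. copies: one must verify that the copy shared between the two kernels can be relabeled consistently (so that the ``copy $2$'' convention of the statement matches the ``copy $1$'' convention of the projection), and one must confirm non-degeneracy of the first projection so that the $\sqrt{n}$-rate Gaussian limit, rather than a degenerate higher-order limit, is the operative one. Under a non-trivial latent dependence the projection covariance $\mathrm{Cov}(\mathbf{g}(X_1))$ is non-zero on the relevant subspace, so the first-order term governs the limit and the lemma follows; this is exactly the computation carried out in \citet{hoeffding1992class} and \citet{el2003spearman}.
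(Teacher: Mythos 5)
Your proposal is correct and follows essentially the same route as the paper, which simply invokes the U-statistic central limit theorem of \cite{hoeffding1992class} and \cite{el2003spearman} without spelling out the projection argument; your H\'ajek-projection derivation and the identification $\mathrm{Cov}(g_{ij},g_{kl}) = \E[h_{ij}(X_1,X_2)h_{kl}(X_1,X_3)] - K_{ij}K_{kl}$ (equivalent to the paper's shared-copy-$2$ convention by exchangeability) supply exactly the details being cited. The only minor over-caution is the non-degeneracy remark: the multivariate CLT for non-degenerate-order-one U-statistics delivers the stated (possibly singular) Gaussian limit without needing a separate verification.
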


Now, we can rewrite the expression $E(sgn((X_{i1-X_i2})(X_{j1}-X_{j2})(X_{k2}-X_{k3})(X_{l2}-X_{l3})))$ as follows - 
\begin{equation}
\resizebox{\textwidth}{!}{$\begin{aligned}
& E(sgn((X_{i1-X_i2})(X_{j1}-X_{j2})(X_{k2}-X_{k3})(X_{l2}-X_{l3}))) \\
& = E(E(sgn((X_{i1}-X_{i2})(X_{j1}-X_{j2})(X_{k2}-X_{k3})(X_{l2}-X_{l3}))|(X_{i2},X_{j2},X_{k2},X_{l2})))\\
& = E(E(sgn((X_{i1}-X_{i2})(X_{j1}-X_{j2}))|(X_{i2},X_{j2},X_{k2},X_{l2}))*E(sgn((X_{k2}-X_{k3})(X_{l2}-X_{l3}))|(X_{i2},X_{j2},X_{k2},X_{l2})))\\
& = E(E(sgn((X_{i1}-X_{i2})(X_{j1}-X_{j2}))|(X_{i2},X_{j2}))*E(sgn((X_{k2}-X_{k3})(X_{l2}-X_{l3}))|(X_{k2},X_{l2})))\\
& = E(H_{ij}(X_{i2},X_{j2}) H_{kl}(X_{k2},X_{l2}))
\end{aligned}$}\label{eq: k-asymp}
\end{equation}
,where, $H_{ij}(x,y)= E(sgn((X_i-x)(X_j-y))$. We can estimate $H_{ij}(x,y)$ from sample as - $\hat{H}_{ij}(x,y)= \frac{1}{n}\sum_{m=1}^{n}(sgn((X_{im}-x)(X_{jm}-y))$

Hence, the quantity in \eqref{eq: k-asymp} can be estimated as  - $$
\frac{1}{n}\sum_{m=1}^{n}\hat{H}_{ij}(X_{im},X_{jm})\hat{H}_{kl}(X_{km},X_{lm}) = \frac{1}{n}\sum_{m=1}^{n} \hat{H}_m \hat{H}_m^T $$.

Here, $\hat{H}_m= (\hat{H}_{12}(X_{1m},X_{2m}), \cdots, \hat{H}_{(p-1),p}(X_{(p-1)m},X_{pm}))$ is a $\frac{p*(p-1)}{2}$ dimensional vector. 

\cite{perreault2022efficient} provides an efficient way of calculating $\hat{H}_m$ in $O(nlogn)$ FLOPs for continuous data. We extend this approach to account for mixed-type data and presence of ties. This approach is a significant improvement over calculating the quantity in \eqref{eq: k-asymp} blatantly which would have required $O(n^4)$ FLOPs. Hence, we provide a novel efficient way of calculating asymptotic variance of Kendall's Tau which would have been infeasible for even moderate $n$. 

Now we want to derive the asymptotic normality of $vecl(\Sigma_n)$ and $vec(\beta)$ using Delta method and the following result. 

As shown in \cite{archakov2018new}, a correlation matrix $A$ can be parametrized by $vecl(logA)$. There exists a bijective map $\gamma: \mathbb{C}_p \longrightarrow \mathbb{R}^{p(p-1)/2}$ which is defined by $\gamma(A)= vecl(logA)$, where $\mathbb{C}_p$ denotes the set of $p \times p$ correlation matrices. As described in \cite{tracy1988patterned}, a general technique of defining derivatives with respect to a structured matrix (such as a correlation matrix) is to first define a map from the matrix to the independent elements of the matrix and then extend the function under investigation to the set of general matrices. For example, let's take a function $h(A)$ of a correlation matrix $A$, then we will define the derivative as - 

$$ \frac{dvec(h(A))}{dvecl(\gamma(A))} = \frac{dvec(h(A))}{dvec(A)} \frac{dvec(A)}{dvecl(A)}\frac{dvecl(A)}{dvecl(\gamma(A))} $$. 

where, the first derivative $\frac{dvec(h(A))}{dvec(A)}$ is defined assuming $h$ is a general map defined on unstructured matrices. We can use this result and chain rule to derive the following - 
\begin{equation}
    \begin{aligned}
        \frac{dvecl(\Sigma)}{dvecl(\gamma(K))} &= \frac{dvecl(\Sigma)}{dvecl(K)} \frac{dvecl(K)}{dvecl(\gamma(K))} = \mathbb{D}_g{\Gamma} \\
        \frac{dvec(\beta)}{dvecl(\gamma(K))} &= \frac{dvec(\beta)}{dvec(\Sigma)} \frac{dvec(\Sigma)}{dvecl(\Sigma)} \frac{dvecl(\Sigma)}{dvecl(\gamma(K))}  = \mathbb{D}_\beta H_p \mathbb{D}_g{\Gamma} 
    \end{aligned}
    \label{eqn: asymp-sig-beta}
\end{equation}

Here, $H_p$ denotes duplication matrix of order $p$ which transforms $vecl(A)$ to $vec(A)$ for any matrix $A$, $\mathbb{D}_g = diag(g'(K))$, where $g'$ is the first order derivative of the indidividual bridging functions and then we apply it to $K$, $\mathbb{D}_\beta = ((\Sigma_{22}^{-1}, - (\Sigma_{21} \otimes I_p)(\Sigma_{22}^{-1} \otimes \Sigma_{22}^{-1})) \Tilde{E}$, where $\Tilde{E}$ transforms $vec(\Sigma)$ to $(vec(\Sigma_{21}),vec(\Sigma_{22}))$. $\Gamma$ is calculated in \cite{archakov2018new}. 

Now, to use the results in \eqref{eqn: asymp-sig-beta}, we first derive the asymptotic normality of $\sqrt{n}(vecl(\gamma(K_n) - \gamma(K))$ using results in \cite{archakov2018new} and calculate the asymptotic covariance matrix as $V_\gamma$. Then, under the regularity assumptions, we apply delta method to get asymptotic covariance matrix of $\sqrt{n}(vecl(\hat{\Sigma}_n- vecl(\Sigma))$ as $V_\Sigma = (\mathbb{D}_g{\Gamma})^T V_{\gamma} \mathbb{D}_g{\Gamma}$ and asymptotic covariance matrix of $\sqrt{n}(\hat{\beta_n}-\beta)$ as $V_\beta = 
(\mathbb{D}_\beta H_p \mathbb{D}_g{\Gamma} )^T V_{\gamma} \mathbb{D}_\beta H_p \mathbb{D}_g{\Gamma}$. Combining the previous derivations, the steps to get the asymptotic variance can be summarized as follows - 

\begin{enumerate}
    \item  Calculate variance of lower-triangular part of Kendall's Tau matrix - $V_{\tau}$ of dimension $ (\frac{p(p-1)}{2} \times \frac{p(p-1)}{2})$. Complexity: $O(n\log n p^4)$. But all the $O(p^4)$ calculcations can be done in parallel theoretically, so, the parallel computation cost will be $O(n\log n)$.
    \item Scale it to make it complete matrix, $V_{\tau}^* = U_p V_{\tau} t(U_p)$. $U_p$ is $p ^2 \times (\frac{p(p-1)}{2})$ matrix. 
    \item Calculate variance-covariance of $\gamma(K)$, known as $V_{\gamma} = L_p A^{-1} V_{\tau}^* A^{-1} L_p^T$, here, $A$ is Jacobian matrix of order $p^2 \times p^2$, and $L_p$ is scaling matrix of dimension $\frac{p(p-1)}{2} \times p^2$. Inverting $A$ requires inverting $p^2 \times p^2$ diagonal matrix. 
    \item Calculate $\mathbb{D}_g = U_p \Delta_g U_p^T$, where $\Delta_g$ is $\frac{p(p-1)}{2} \times \frac{p(p-1)}{2}$ matrix of $g'(K)$.
    \item Calculate $\Gamma = \frac{dvecl(K)}{dvecl(\gamma(K))}$, which is $\frac{p(p-1)}{2} \times \frac{p(p-1)}{2}$ matrix. 
    \item Calculate $V_\Sigma = \mathbb{D}_g \Gamma$. 
    \item Calculate $\mathbb{D}_{\beta}= ((\Sigma_{22}^{-1}, - (\Sigma_{21} \otimes I_p)(\Sigma_{22}^{-1} \otimes \Sigma_{22}^{-1})) \Tilde{E}$, $\Tilde{E}$ transforms $vec(\Sigma)$ to $(vec(\Sigma_{21}),vec(\Sigma_{22}))$, is a $p(p-1) \times p^2$ matrix. 
    \item Get $\mathbb{D}_\beta U_p \mathbb{D}_g{\Gamma}$ to get $V_\beta$
\end{enumerate}

Now, suppose, we have multiple outcomes ($q \times 1$) and then the asymptotic covariance matrix of the $pq$-length multiple linear regression coefficient of $vec(\hat{\beta}_n) - vec(\beta))$  as - $V_\beta = 
(\mathbb{D}_\beta^{(2)} H_p \mathbb{D}_g{\Gamma} )^T V_{\gamma}\mathbb{D}_\beta^{(2)}H_p \mathbb{D}_g{\Gamma}$, where, $\mathbb{D}_\beta^{(2)} = ((\Sigma_{YY}^{-1} \otimes I_q), - (\Sigma_{YX} \otimes I_p)(\Sigma_{YY}^{-1} \otimes \Sigma_{YY}^{-1})) \Tilde{E}$, where $\Tilde{E}$ transforms $vec(\Sigma)$ to $(vec(\Sigma_{YX}),vec(\Sigma_{XX}))$. Now, suppose we want to test if a subset of the coefficients are $0$, i.e., for a scaling $m \times pq$ matrix $A$, we need to test $A vec(\beta) = 0$, our test statistic would be - $n A vec(\hat{\beta}_n)' (A'V_\beta A)^{-1} A vec(\hat{\beta}_n) \sim  \chi^2_{m}$ under null hypothesis.


\bibliographystyle{agsm}
\bibliography{ref}

\newpage
\begin{center}
{\Large\textbf{Supplementary Materials}}
\end{center}

\beginsupplement

\section{Additional simulation results}
\label{sec:supp-sim}
\begin{figure}[H]
\centering
\begin{subfigure}[h]{  0.95\textwidth}
\centering
\includegraphics[scale=0.3]{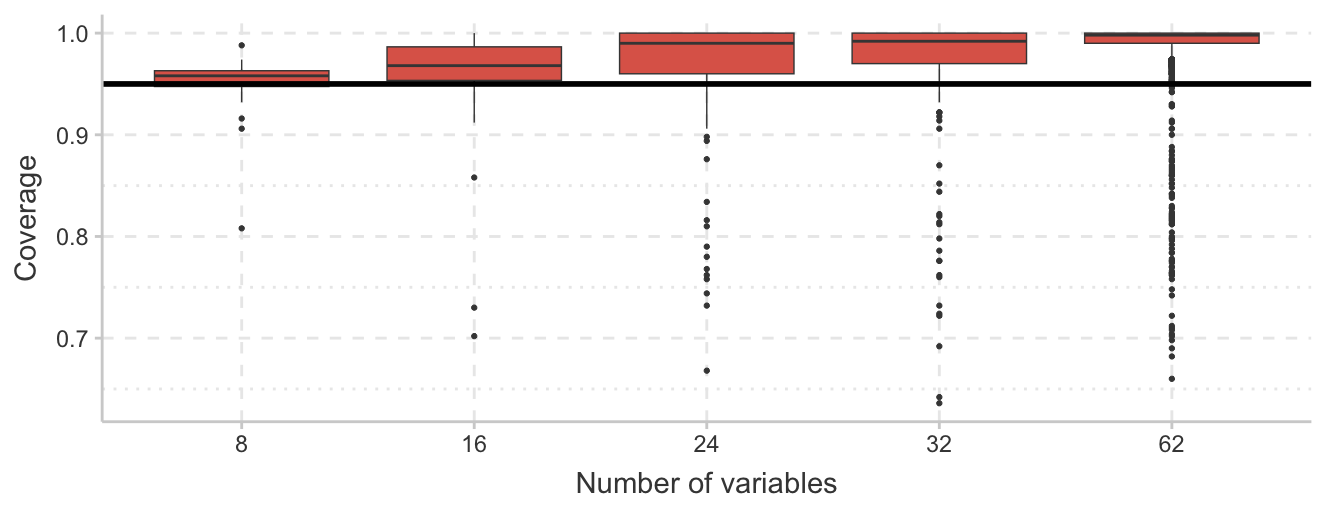}
\caption{Latent correlation}
\label{fig:sim-r-cov}
\end{subfigure}
\centering
\begin{subfigure}[h]{  0.95\textwidth}
\centering
\includegraphics[scale=0.3]{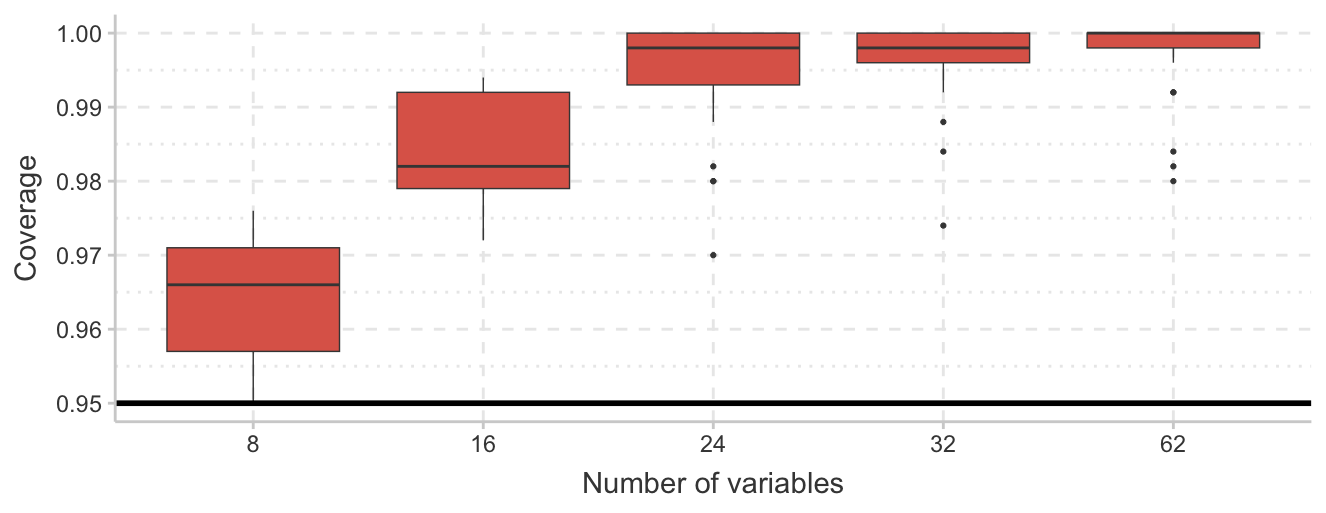}
\caption{The latent regression coefficients}
\label{fig:sim-beta-cov}
\end{subfigure}
     \caption{The boxplot of coverage of estimated parameters (averaged across 500 replicates) for the asymptotic confidence interval. The bold black line denotes $95\% (0.95)$.}
         \label{fig:sim-cov}
\end{figure}
\section{Additional NHANES figures and tables}
\label{sec:supp-nhanes}
\begin{table}[H]
\centering
\resizebox{!}{.30\paperheight}{\begin{tabular}{rlll}
  \hline
 & Variable name & Variable type & NHANES code \\ 
  \hline
1 & Segmented neutrophils percent (\%) & Continuous & LBXNEPCT \\ 
  2 & Alkaline phosphotase (U/L) & Continuous & LBXSAPSI \\ 
  3 & Protein, total (g/L) & Continuous & LBDSTPSI \\ 
  4 & Uric acid (mg/dL) & Continuous & LBXSUA \\ 
  5 & Pulse (heart rate) & Continuous & BPXPLS \\ 
  6 & Mean cell volume (fL) & Continuous & LBXMCVSI \\ 
  7 & Glycohemoglobin & Continuous & LBXGH \\ 
  8 & Iron(ug/dL) & Continuous & LBXSIR \\ 
  9 & Sodium (mmol/L) & Continuous & LBXSNASI \\ 
  10 & Albumin (g/L) & Continuous & LBDSALSI \\ 
  11 & Creatinine (umol/L) & Continuous & LBDSCRSI \\ 
  12 & Cholesterol, total (mmol/L) & Continuous & LBDSCHSI \\ 
  13 & Bilirubin, total (mg/dL) & Continuous & LBXSTB \\ 
  14 & Glucose Serum (mg/dL) & Continuous & LBXSGL \\ 
  15 & Bicarbonate (mmol/L) & Continuous & LBXSC3SI \\ 
  16 & Platelet count (\%) SI & Continuous & LBXPLTSI \\ 
  17 & Red cell distribution width (\%) & Continuous & LBXRDW \\ 
  18 & Lactate Dehydrogenase & Continuous & LBXSLDSI \\ 
  19 & C-reactive protein (mg/dL) & Continuous & LBXCRP \\ 
  20 & Folate, RBC (ng/mL RBC) & Continuous & LBXRBF \\ 
  21 & Blood urea nitrogen (mg/dL) & Continuous & LBXSBU \\ 
  22 & Phosphorus (mmol/L) & Continuous & LBDSPHSI \\ 
  23 & Vitamin D & Continuous & LBDVIDMS \\ 
  24 & Hemoglobin (g/dL) & Continuous & LBXHGB \\ 
  25 & Calcium, total (mg/dL) & Continuous & LBXSCA \\ 
  26 & Folate Serum (nmol/L) & Continuous & LBDFOLSI \\ 
  27 & Red blood cell count (million cells/uL) & Continuous & LBXRBCSI \\ 
  28 & Lymphocyte percent (\%) & Continuous & LBXLYPCT \\ 
  29 & Systolic BP & Continuous & BPXSY \\ 
  30 & Stroke & Binary & MCQ160F \\ 
  31 & Thyroid & Binary & MCQ160M \\ 
  32 & Cancer & Binary & MCQ220 \\ 
  33 & Heart Attack & Binary & MCQ160E \\ 
  34 & Coronary Heart Disease & Binary & MCQ160C \\ 
  35 & Angina pectoris & Binary & MCQ160D \\ 
  36 & Osteoporsis & Binary & OSQ060 \\ 
  37 & Diabetes & Binary & DIQ010 \\ 
  38 & Arthritis & Binary & MCQ160A \\ 
  39 & High blood pressure & Binary & BPQ020 \\ 
  40 & Cough regularly & Binary & RDQ031 \\ 
  41 & Broken or fractured a hip & Binary & OSQ010A \\ 
  42 & Medications & Ordinal & RXDCOUNT \\ 
  43 & Confusion or inability to remember & Binary & PFQ057 \\ 
  44 & Difficulty managing money & Ordinal & PFQ061A \\ 
  45 & Difficulty standing up from armless chair & Ordinal & PFQ061I \\ 
  46 & Difficulty getting in and out of bed & Ordinal & PFQ061J \\ 
  47 & Difficulty standing for about 2 hours & Ordinal & PFQ061M \\ 
  48 & Difficulty stooping, crouching and kneeling & Ordinal & PFQ061D \\ 
  49 & Difficulty grasping or holding small objects & Ordinal & PFQ061P \\ 
  50 & Difficulty lifting/carrying something weighing 10 lbs & Ordinal & PFQ061E \\ 
  51 & Difficulty preparing meals & Ordinal & PFQ061G \\ 
  52 & Difficulty using fork and knife & Ordinal & PFQ061K \\ 
  53 & Difficulty dressing yourself & Ordinal & PFQ061L \\ 
  54 & Difficulty attending social events & Ordinal & PFQ061R \\ 
  55 & Difficulty pushing or pulling large objects & Ordinal & PFQ061T \\ 
  56 & General hearing & Ordinal & AUQ130 \\ 
  57 & Weak or failing kidneys & Binary & KIQ022 \\ 
  58 & Leaked/lost control of urine & Binary & KIQ044 \\ 
  59 & Self-reported health & Ordinal & HUQ010 \\ 
  60 & Health compared to a year ago & Ordinal & HUQ020 \\ 
  61 & Frequency of healthcare use in last year & Ordinal & HUQ050 \\ 
   \hline
\end{tabular}}
\caption{\label{tab:variable-list}The frailty variables we include in our model from NHANES 1999-2010}
\end{table}

\begin{figure}[H]
\centering
\includegraphics[scale=0.3]{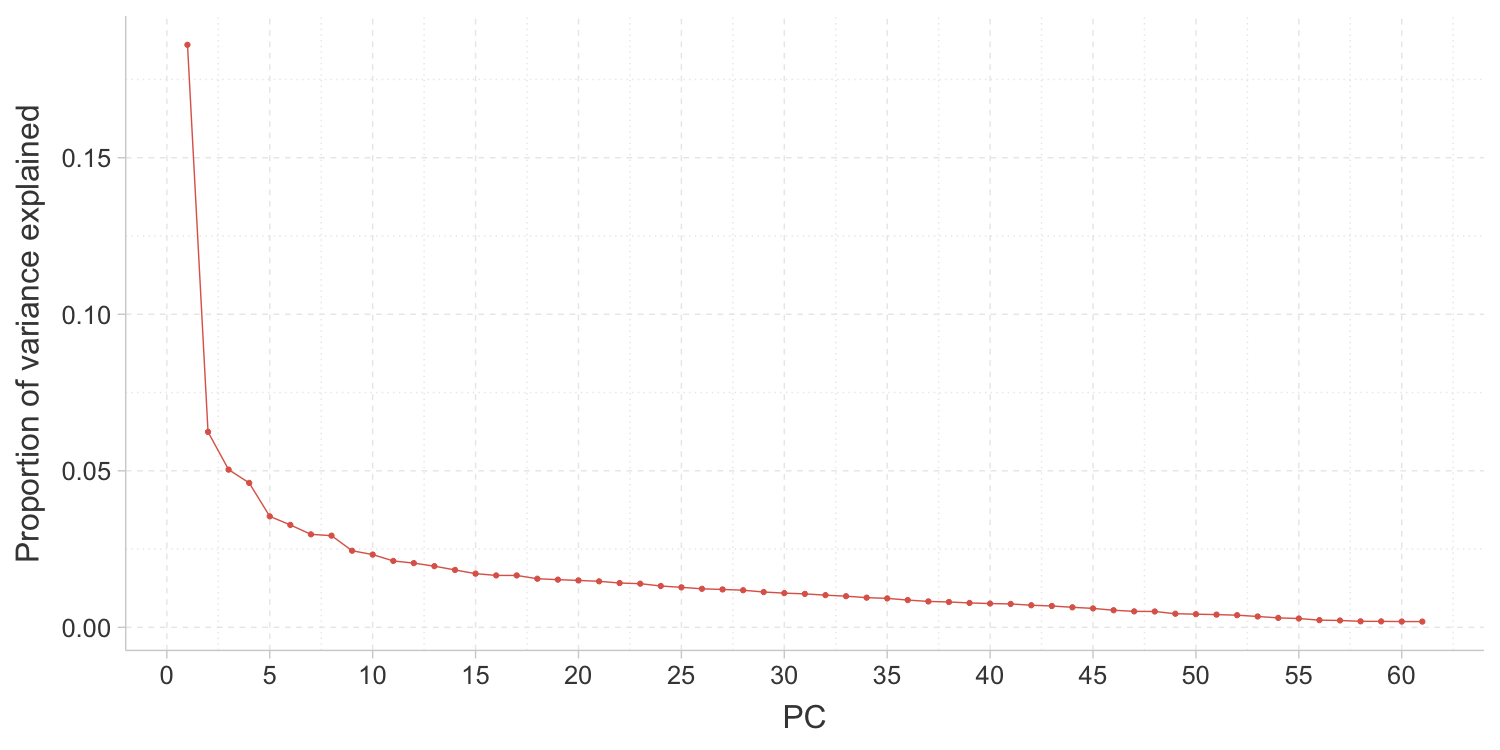}
\caption{The scree plot showing proportion of variance explained for the $61$ PCs}
\label{fig:scree-nhanes}
\end{figure}

\begin{figure}[H]
     \centering
     \begin{subfigure}[b]{  0.97\textwidth}
         \centering
         \includegraphics[width=\textwidth]{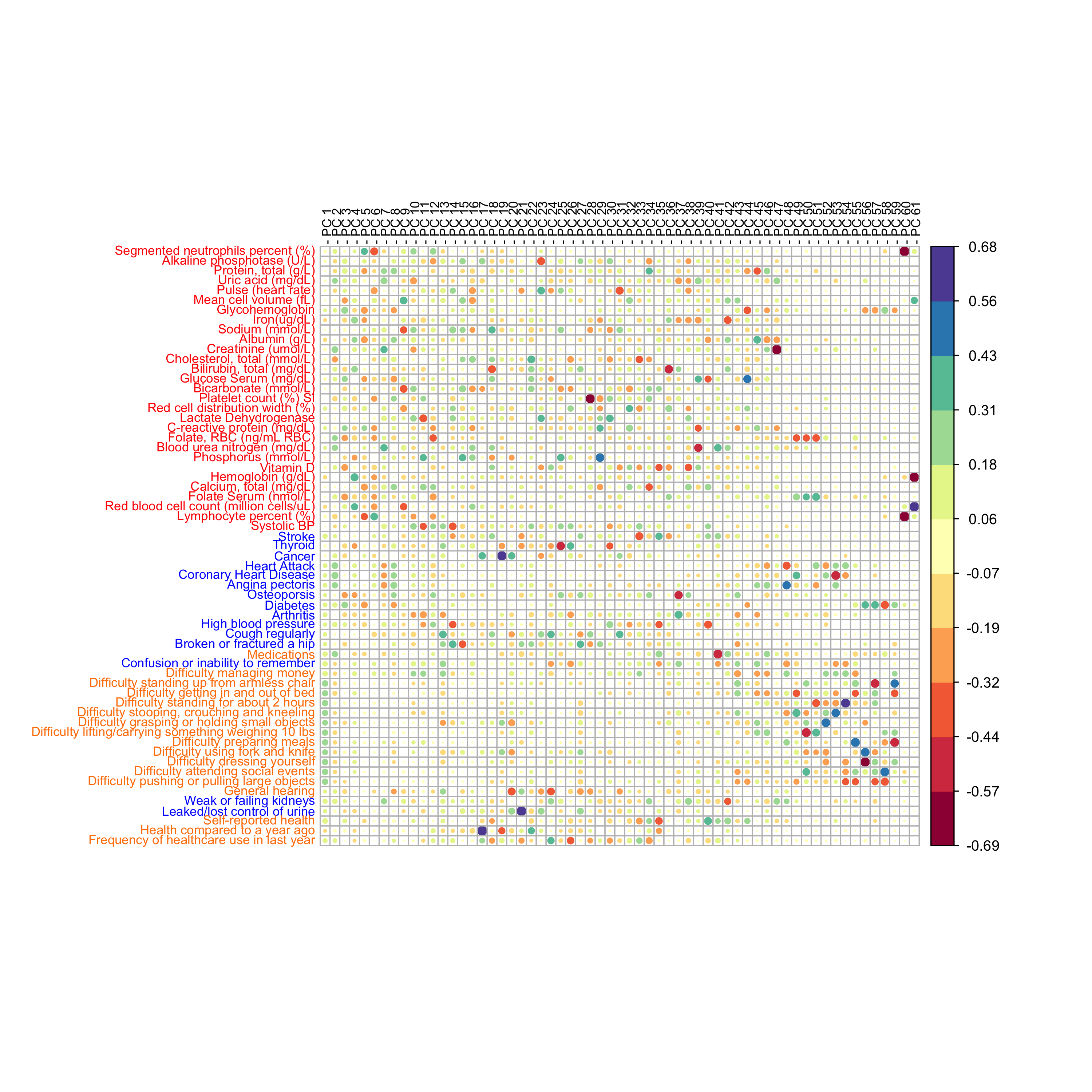}
         \caption{The heatamp showing PC loadings for latent correlation matrix corresponding to different variables}
         \label{fig:pcloading-nhanes}
     \end{subfigure}
     \hfill
     \begin{subfigure}[b]{0.65\textwidth}
         \centering
         \includegraphics[width=\textwidth]{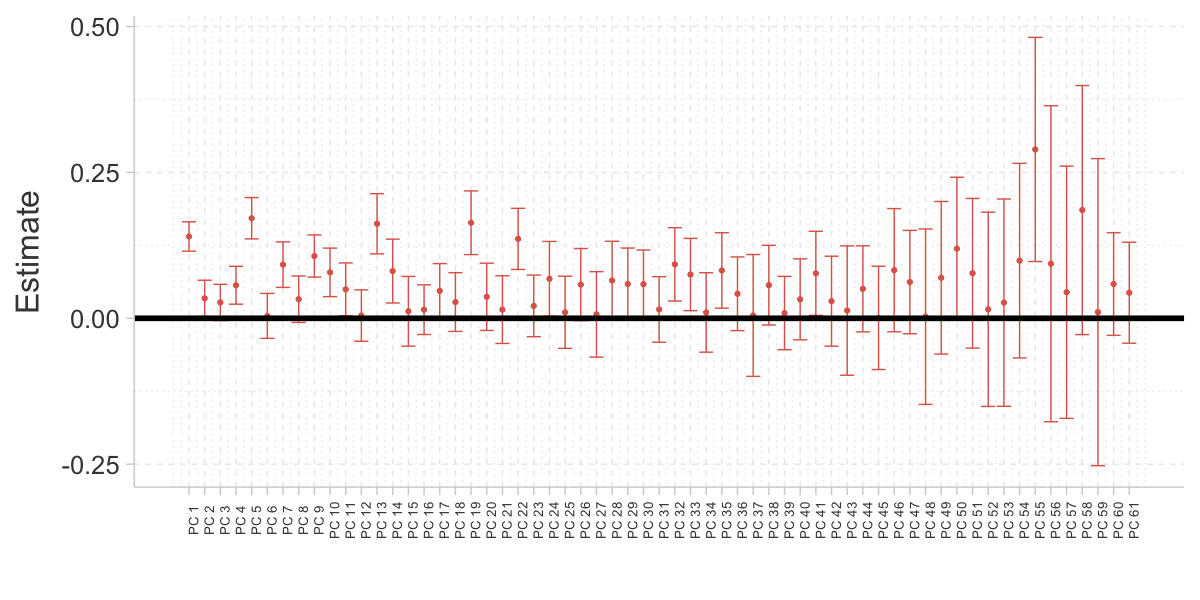}
         \caption{The latent principal component Regression coefficients for outcome mortality}
         \label{fig:latpcreg-nhanes}
     \end{subfigure}
     \caption{SGC-PCA and SGC-PCR results for the NHANES frailty application.}
    \label{fig:nhanes-pca-pcr}
\end{figure}

\end{document}